\documentclass[12pt,fleqn,reqno]{amsart} 

\usepackage{amsrefs}  
\usepackage{amsthm}
\usepackage{epsfig}
\usepackage{graphicx}
\usepackage{enumerate}
\usepackage{color} 
\usepackage[toc,page]{appendix}
\usepackage[dvipsnames]{xcolor}
\usepackage{bm}
\usepackage{dsfont} 
\newcommand{\mathbbm}[1]{\mathds{#1}}
\usepackage{cancel}
\usepackage{enumitem}
\usepackage{physics}
\usepackage{amsmath,euscript}
\usepackage{amsthm}
\usepackage{amssymb}
\usepackage{graphicx}
\usepackage{mathrsfs}
\usepackage{epsf}
\usepackage{xcolor}
\usepackage{psfrag}
\usepackage{enumerate}
\usepackage{amsfonts,amssymb,amsthm,amsmath} 
\usepackage{hyperref}

\usepackage{enumitem} 
\usepackage{xspace}
\usepackage[margin=1.37in]{geometry}
\usepackage{cancel}

\usepackage{todonotes} 
\usepackage{float} 
\usetikzlibrary{decorations.markings}



\vsize  =23.5 true cm
\hsize  =15.4 true cm
\hoffset= 0 true cm
\voffset= 0 true cm 
\baselineskip=16pt


\usepackage{soul}






\newtheorem{theorem}{Theorem}[section]
\newtheorem{corollary}[theorem]{Corollary}

\newtheorem{lemma}[theorem]{Lemma}
\newtheorem{prop}[theorem]{Proposition}

\theoremstyle{remark}
\newtheorem{remark}[theorem]{Remark}

\theoremstyle{definition}

\numberwithin{thm}{section}
\numberwithin{equation}{section}




\vsize  =23.5 true cm
\hsize  =15.4 true cm
\hoffset= 0 true cm
\voffset= 0 true cm 
\baselineskip=16pt

\usepackage{soul}
\setstcolor{red}


\definecolor{green}{rgb}{0.0, 0.5, 0.5}
\definecolor{yellow}{rgb}{0.5, 0.5, 0}
\definecolor{lgray}{gray}{0.9}
\definecolor{llgray}{gray}{0.95}
\definecolor{lllgray}{gray}{0.975}
\newcommand{\red}{\color{black}}

\newcommand{\nc}{\newcommand}

\nc{\la}{\label}
\nc{\ba}{\begin{array}}
\nc{\ea}{\end{array}}
\nc{\bs}{\begin{split}}
\nc{\es}{\end{split}}

\newcommand{\R}{\mathbb{R}}


%



%


  



%

\nc{\al}{\alpha}
\nc{\G}{\Gamma}
\nc{\et}{\eta} 
\nc{\g}{\gamma}
\nc{\gam}{\gamma}
\nc{\ka}{\kappa}
\nc{\lam}{\lambda}
\nc{\Lam}{\Lambda}
\nc{\Om}{\Omega}
\nc{\om}{\omega}
\nc{\ta}{\tau}
\nc{\w}{\omega}
\nc{\io}{\iota}
\nc{\z}{\zeta}
\nc{\s}{\sigma}
\nc{\Si}{\Sigma}
\nc{\vphi}{\varphi}
%




\nc{\bP}{\bar{P}}
\nc{\bQ}{\bar{Q}}

\nc{\ran}{\rangle}
\nc{\lan}{\langle}

\newcommand{\one}{\mathbf{1}}


\nc{\bfone}{{\bf 1}}
\nc{\1}{{\bf 1}}

\renewcommand{\Re}{\mathrm{Re}} 
\renewcommand{\Im}{\mathrm{Im}} 
\newcommand{\re}{\operatorname{Re}}
\newcommand{\im}{{\rm Im}}


\newcommand{\p}{\partial}

\newcommand{\n}{\nabla}

\newcommand{\DETAILS}[1]{}
\renewcommand{\st}[1]{}






\begin{document}

\rightline {\small \emph{Published in}}  
\rightline{\small \emph{Lett Math Physics, 2026}}
\bigskip

\title[On light cone bounds for Markov quantum open systems]{On light cone bounds for Markov quantum open systems}
\author[I.~M.~Sigal]{Israel Michael Sigal}
	\address{Department of Mathematics, University of Toronto, Toronto, ON M5S 2E4, Canada }
	\email{im.sigal@utoronto.ca}
	
	\author[X.~Wu]{Xiaoxu Wu}
	\address{Mathematical Sciences Institute, Australian National University, Acton ACT 2601, Australia}
		\email{Xiaoxu.Wu@anu.edu.au}
\date{\today}

\subjclass[2020]{35Q40 (primary); 35Q94, 81P45, 46N50 (secondary).}
	\keywords{Open quantum systems; quantum information; quantum evolution; quantum Markov process; completely positive maps; Lindblad equation; Lindbladian; jump operators; Lieb-Robinson bound; space-time estimates; maximal velocity estimates}

\maketitle
{
}

\tableofcontents
\begin{abstract}
  We study space-time behaviour of solutions of the von Neumann-Lindblad equations underlying the dynamics of  
   Markov quantum open systems.  For a large class of these equations, we prove the existence of an effective light cone with an exponentially small spill-over.

\end{abstract}


 \section{Introduction}\label{ref: setup}
\subsection{Markovian quantum open systems} 
\label{set: 1.1} In this paper, we study space-time dynamics of Markov open quantum systems (MOQS) on  the Hilbert space $\mathcal H=L^2(\Lambda)$, where $\Lambda$ is either $\mathbb R^n$ or $\mathbb Z^n$. 
%
 %
We prove the existence of an effective light cone with an exponentially small spill-over for a large class of 
such systems.


{\red Conceptually, our bounds are related to the celebrated Lieb-Robinson bound {\red\cite{LR}} which plays a central role in analysis of evolution of quantum information (see e.g. {\red \cites{NachVerZ, Pou} for the Lindladian setting in quantum spin systems and \cites{BH,BHV,CL,CL2,EisOsb,EldredgeEtAl,EpWh, FLS2, FLSZ, H1,H0,H2,H3,HastKoma, KGE, KuwLem2024,  KuwSaito1,KVS, LRSZ, NachOgS, NRSS, NachOgS2,NachSchlSSZ,NachS,NSY2, RS, SHOE, SZ,TranEtAl3,TranEtal5} for the unitary evolutions)}.}


An open quantum system (OQS) is a pair $(\mathcal S_1^+, \beta_t)$, where the state space $\mathcal S_1^+$ is the space of positive trace-class operators (density operators) on a Hilbert space $\mathcal H$ and the evolution $\beta_t$ is a family of quantum maps (or quantum channels) on $\mathcal S_1^+$, i.e. linear, completely positive, trace preserving maps (see \cites{Al, AlLe, BrPet, Davies4, GS, Kr1} and the references therein).\par

The concept of OQS is an extension of that of the (closed) quantum system ($\mathcal S_1^+, \alpha_t$), where $\alpha_t$ is the von Neumann dynamics,
\begin{equation}\label{eq: 1.1}
\alpha_t(\rho)=e^{-iHt}\rho e^{iHt},
\end{equation}
incorporating, in a natural way, the influence of the system's environment. OQS arise also in the quantum measurement theory where the degrees of freedom of systems under investigation (rather than of the environment) are integrated out and in studying entanglement between two or more systems.

Importantly, even when the interaction with  environment can be neglected, investigation of OQS is needed to determine whether properties of closed systems are robust w.r.to weak interaction with an outside environment. For instance, whether transmission of quantum information is stable w.r.to decoherence induced by such an interaction.\par
It is shown in \cites{GKS, Lind} that under the Markovian assumption that $\beta_t$ is a strongly continuous semigroup, 
 \begin{equation}\label{beta-t-semigr}
  \beta_t \circ \beta_s= \beta_{t+s},\, \qquad\forall t,s\ \geq 0, \quad\text{ and }\beta_t\xrightarrow{s} \mathbbm 1\text{ as }t\downarrow 0,
\end{equation}
the evolution $\rho_t=\beta_t (\rho_0)$ satisfies the von Neumann-Lindblad equation (vNLE) (here and in the rest of this paper, we set $\hbar =1$) 
\begin{equation}\label{vNLeq}\p_t \rho_t= -i[H, \rho]+\sum_{j=1}^{\infty}(W_j \rho_t W_j^*-\frac{1}{2}\{W_j^* W_j,\rho_t\}),
\end{equation}
with the initial condition $\rho_{t=0}=\rho_0$. Here $H$ and $W_j, j=1,\cdots$, are operators on $\mathcal H$, $H$ is a quantum Hamiltonian of the system of interest and $W_j$ are operators produced by the interaction with environment, called the jump operators, and  $\{A,B\}:=AB+BA$.

Conversely, under rather general conditions (see a discussion below and in Appendix \ref{sec:vNLEexist}), solutions to the vNLE exist for any initial condition $\rho_0$ in $\mathcal S_1$ and generate Markov open quantum (MOQ) dynamics, $\beta_t(\rho_0)=\rho_t$. Thus the class of MOQ semigroups is rather rich.  Furthermore, equations of the form~\eqref{vNLeq} were derived in the van Hove limit of a particle system coupled to a thermal reservoir, see 
\cites{Davies1,Davies2,Davies3,Davies4, JP1}. Hence,~\eqref{vNLeq} captures, at least approximately, natural physical models.


Clearly, the vNLE is an extension of the von Neumann equation (vNE)
\begin{equation}\label{vNE}
\p_t \rho_t=-i[H, \rho_t],
\end{equation} 
which generates evolution~\eqref{eq: 1.1}, describing the statistics of closed quantum systems. 
While the vN dynamics can be always reduced to the Schr\"odinger one on the corresponding Hilbert space ($L^2(\Lam)$, in our case), this is not true for the vNLE.  
Thus, vNLE is a genuine extension of  the Schr\"odinger equation beyond QM (to OQS, or to what can be termed as quantum statistics) making it 
  a central object of quantum physics.


By virtue of its origin, the vNLE plays a foundational role 
 in quantum information science and in non-equilibrium quantum  statistical mechanics,
 \DETAILS{ to study the evolution of quantum information and 
 the thermalization 
   and decoherence,} 
   see \cites{CEPH, CSM, DCMB, 
   IngKoss, IngKoss2, Pou,  
   Weinb}, and \cites{Al, AlLe, OS}, respectively, and references therein. It is also used in computational physics to construct the Gibbs and ground states for given Hamiltonians,~\cites{CB, CKB, CL, DCL, KB, SM, TOVPV, YA} and references therein. See  \cite{Presk} for an elementary lecture-notes exposition of a role of vNLE in quantum information theory.
   
The vNLE also appears naturally in 
Fr\"ohlich et al theory of randomness in Quantum Mechanics (ETH-Approach, see \cite{FGP} and references therein).

Mathematically, vNLE  is a key representative of non-abelian PDEs. It is related to stochastic differential equations on Hilbert spaces, see~\cite{Holevo}.

As is standard, we assume that the operators $H$ and $W_j, j=1,\cdots$, satisfy the conditions 
\begin{itemize}
\item[(H)] $H$ is a self-adjoint operator; 
\end{itemize}
\begin{itemize}
\item[(W)] $W_j, j=1,\cdots,$ are bounded operators s.t.\\ $\sum\limits_{j=1}^\infty W_j^* W_j$ converges weakly.
\end{itemize}
\par

It is shown in \cite{Davies5} 
 that, under conditions (H) and (W),  the operator 
\begin{equation}\label{L}
 L(\rho)=-i[H, \rho]+\sum_{j=1}^{\infty}(W_j \rho W_j^*-\frac{1}{2}\{W_j^* W_j,\rho\}),
\end{equation}
 defined by the r.h.s. of 
vNLE, generates a OQD semigroup, $\beta_t=e^{L t}$. This implies, in particular, 
that Eq.~\eqref{vNLeq} with initial conditions in $\mathcal S_1^+$ has unique weak 
solutions in $\mathcal S_1^+$ (and strong solutions on the natural domain of $L$), see \cite{OS} for a streamlined version and more references, and Appendix~\ref{sec:vNLEexist} below, for a brief discussion. 

 We call a QOD $\beta_t$ satisfying~\eqref{beta-t-semigr}, the Markov QOD, or MQOD and $L$, the von Neumann-Lindblad (vNL) generator. \par

For other results on vNLE~\eqref{vNLeq}, we mention the scattering theory, see \cites{FFFS, FaFr}, and the problem of return to equilibrium, see~\cite{OS} and references therein. \par


\underline{Notation} In what follows, $\mathcal H=L^2(\Lambda)$, where $\Lambda$ is either $\mathbb R^n$ or $\mathbb Z^n$, $\mathcal B(X)$ denotes the space of bounded operators on a Banach space $X$, and $\mathcal S_1$ and $\mathcal S_2$, the Schatten spaces of trace-class and Hilbert-Schmidt operators. The norms in $\mathcal H$ and $\mathcal B(\mathcal H)$ are denoted by $\|\cdot\|$, and in $\mathcal S_1, \mathcal S_2$ and $\mathcal B(\mathcal S_1)$, by $\|\cdot\|_1, \|\cdot\|_2$ and $\|\cdot\|_{1}^{op}$, respectively. Explicitly, $\|\lambda\|_1=\Tr(\lambda^*\lambda)^{\frac{1}{2}}$ and $\|\lambda\|_2=(\Tr\lambda^*\lambda)^{\frac{1}{2}}$. $A,B$ will denote bounded operators (observables), $X,Y\subset \Lambda$ stand for subsets of $\Lambda$ and $\chi_X$, the characteristic function of $X\subset \Lambda$. In what follows, $\lambda,\mu\in \mathcal S_1$ and $\rho\in \mathcal S_1^+$, always. \par

To fix ideas, we assume that the DO's $\rho$ are normalized as $\Tr \rho=1$. 

\subsection{Light cone 
 bound} \label{sec:MVB-oqs}

Consider on $\mathcal H=L^2(\Lambda)$ the $n$-parameter group of unitary operators $T_\xi$ of multiplication by the function $e^{-i\xi\cdot x}, \xi \in \mathbb R^n$. \par

Define the polystrip $S_a^n, a>0,$ in the complex space $\mathbb C^n$ as 
\begin{equation}
    S_a^n:=\{ \zeta=(\zeta_1,\cdots,\zeta_n)\in \mathbb C^n\,:\, |\im \zeta_j|<a\quad \forall j\}.
\end{equation}
We assume the following conditions:
\begin{itemize}
\item[(AH)] The operators $H_\xi:=T_\xi HT_\xi^{-1}, \xi \in \mathbb R^n$, have the common domain $\mathcal D(H)$ and $H_\xi(H+i)^{-1}$ are bounded operators for all $\xi\in \mathbb R^n$ and the operator function $\xi\to H_{\xi}(H+i)^{-1},$ from $\mathbb R^n$ to $\mathcal B(\mathcal H)$, has an analytic continuation in $\xi$ from $\mathbb R^n$ to $S_a^n$ and this continuation, $H_\zeta$, is such that 
\begin{equation}\label{con: AH}
\text{ $\im H_{i\eta}:=\frac{1}{2i}(H_{i\eta}-H_{i\eta}^*)$ is a bounded operator $\forall\, i\eta\in S_a^n,\,$ $|\eta|=\mu$.}
\end{equation}

\item[(AW)] The operator-functions $\xi\to W_{j,\xi}=T_\xi W_jT_\xi^{-1}, j=1,\cdots,$ have analytic continuations, $W_{j,\zeta}$, as bounded operators from $\mathbb R^n$ to $S_a^n$ and these continuations satisfy (W) $\forall\zeta\in S_a^n$. 
\end{itemize}

For any two sets $X$ and $Y$ in $\Lambda$, let  $d_{XY}$ denote the distance between $X$ and $Y$ and  define $\hat\chi_X: \mathcal S_1\to \mathcal S_1$ by 
\begin{equation}
    \hat \chi_X(\rho)=\chi_X\rho\chi_X.\label{eq: 4.6}
\end{equation}

\begin{theorem}\label{thm1} Assume Conditions~(H),~(W),~(AH) and~(AW). Then, for any $\mu\in (0,a)$ and 
for any two disjoint sets $X$ and $Y$ in $\Lambda$, the MQOD $\beta_t$ satisfies
\begin{equation}
    \|\hat \chi_X\beta_t\hat \chi_Y\|_{1}^{op}\leq Ce^{-2\mu(d_{XY}-ct)},\label{eq: 4.7}
\end{equation}
for any $c>c(\mu)$ and some constant $C=C_{n,c,\mu}>0$ depending on $n,c,\mu$. Here  $c(\mu)\in (-\infty,\infty)$, is given  by~\eqref{def: cmu} below.
\end{theorem}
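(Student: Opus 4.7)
The plan is a Combes--Thomas-type exponential conjugation of the Lindblad generator $L$, followed by a semigroup perturbation estimate, and then transfer of the bound to the desired superoperator.

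\emph{Step 1 (Conjugation).} Fix $\mu\in(0,a)$ and a unit vector $\hat n\in\mathbb R^n$ chosen so that, after an inessential shift by a constant, $\phi(x):=\hat n\cdot x$ satisfies $\phi\le 0$ on $Y$ and $\phi\ge d_{XY}$ on $X$. Let $M:=e^{\mu\phi}$, an unbounded multiplication operator on $\mathcal H$. Noting that $M=T_{i\mu\hat n}$, conditions (AH) and (AW) imply that the formal conjugations $MHM^{-1}$ and $MW_j M^{-1}$ coincide with the analytic continuations $H_{i\mu\hat n}$ and $\tilde W_j:=W_{j,i\mu\hat n}$, with $\tilde W_j$ and $V_j:=MW_j^*M^{-1}=(W_j^*)_{i\mu\hat n}$ bounded, while $\mathrm{Im}\,H_{i\mu\hat n}$ is bounded. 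A short direct computation then yields
\begin{equation*}
L_\mu(\rho):=M\,L(M^{-1}\rho\,M^{-1})\,M \;=\; \Gamma\rho+\rho\,\Gamma^*+\sum_j\tilde W_j\,\rho\,\tilde W_j^{\,*},\qquad \Gamma:=-iH_{i\mu\hat n}-\tfrac12\sum_j V_j\tilde W_j.
\end{equation*}

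\emph{Step 2 (Semigroup bound for $L_\mu$).} Splitting $H_{i\mu\hat n}=A+iB$ with $A=A^*$ (generally unbounded) and $B=\mathrm{Im}\,H_{i\mu\hat n}$ bounded by (AH), we obtain $\Gamma=-iA+K$ where $K\in\mathcal B(\mathcal H)$ has norm bounded by $\|B\|+\tfrac12\sum_j\|V_j\|\,\|\tilde W_j\|$. By standard semigroup perturbation theory (bounded perturbation of the unitary group $e^{-iAt}$), $\Gamma$ generates a strongly continuous semigroup on $\mathcal H$ with $\|e^{\Gamma t}\|\le e^{\|K\|t}$, so $\rho\mapsto e^{\Gamma t}\rho e^{\Gamma^*t}$ is bounded on $\mathcal S_1$ by $e^{2\|K\|t}$. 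The remaining Kraus-type piece $J(\rho):=\sum_j\tilde W_j\rho\tilde W_j^{\,*}$ is bounded on $\mathcal S_1$ with $\|J\|_1^{op}\le\sum_j\|\tilde W_j\|^2$ (convergence by (AW) together with the analogue of (W) at $\zeta=i\mu\hat n$). Duhamel's formula combined with Gronwall's inequality then produces
\begin{equation*}
\|e^{L_\mu t}\|_1^{op}\;\le\;e^{c(\mu)t},\qquad c(\mu):=2\|K\|+\sum_j\|\tilde W_j\|^2,
\end{equation*}
which is the explicit form entering \eqref{def: cmu}.

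\emph{Step 3 (Transfer).} For $\lambda\in\mathcal S_1$ with $\|\lambda\|_1\le 1$ set $\sigma:=\chi_Y\lambda\chi_Y$. Differentiation in $t$ shows that both $M\beta_t(\sigma)M$ and $e^{L_\mu t}(M\sigma M)$ solve the same linear ODE with generator $L_\mu$ and initial datum $M\sigma M\in\mathcal S_1$ (which lies in $\mathcal S_1$ because $\|M\chi_Y\|\le 1$ by the choice of $\phi$); hence they coincide. Inserting $MM^{-1}=I$ on the outside,
\begin{equation*}
\chi_X\,\beta_t(\sigma)\,\chi_X \;=\; (\chi_X M^{-1})\,\bigl[e^{L_\mu t}(M\sigma M)\bigr]\,(M^{-1}\chi_X).
\end{equation*}
Since $\|M\chi_Y\|\le 1$ and $\|\chi_X M^{-1}\|\le e^{-\mu d_{XY}}$, combining with the bound of Step 2 gives
\begin{equation*}
\|\chi_X\beta_t(\sigma)\chi_X\|_1 \;\le\; e^{-2\mu d_{XY}}\,e^{c(\mu)t}\,\|\lambda\|_1 \;=\; C\,e^{-2\mu(d_{XY}-c\,t)}\,\|\lambda\|_1,
\end{equation*}
with $c=c(\mu)/(2\mu)$. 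Taking the supremum over $\|\lambda\|_1\le 1$ yields \eqref{eq: 4.7}.

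\emph{Main obstacle.} The delicate point is Step 2: the formal generator $L_\mu$, obtained by an unbounded similarity, must be shown to generate a $C_0$-semigroup on $\mathcal S_1$ with the claimed exponential growth. This rests crucially on the boundedness of $\mathrm{Im}\,H_{i\mu\hat n}$ from (AH) (so that $\Gamma$ is a bounded perturbation of a generator of a unitary group) and on the boundedness of $\tilde W_j,V_j$ from (AW); a technical check is that the weighted object $M\sigma M$ remains in $\mathcal S_1$ for all relevant $\sigma$, which is ensured by the support condition on $\sigma=\chi_Y\lambda\chi_Y$. A secondary geometric point is the choice of $\hat n$: the above argument applies directly when $X,Y$ are linearly separable at distance $d_{XY}$, and the general case is reduced to this by partitioning $X$ and $Y$ using supporting hyperplanes (available componentwise in the polystrip $S_a^n$ by (AH), (AW)).
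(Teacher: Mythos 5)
Your overall strategy---conjugating the Lindblad generator by $e^{\mu\phi}$ (a Combes--Thomas deformation), bounding the deformed semigroup, and transferring the bound through the geometric weights---is the same strategy the paper uses (there the deformation is written as $\beta_{t,\zeta,\tilde\zeta}=T_{\zeta,\tilde\zeta}\beta_tT_{\zeta,\tilde\zeta}^{-1}$ continued to $\zeta=i\mu\hat n$). However, there are two concrete gaps.

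First, your Step 2 does not prove the theorem as stated, and one of its estimates is unjustified. The theorem asserts the bound with the \emph{specific} threshold velocity $c(\mu)$ of~\eqref{def: cmu}, which is built from the spectral quantity $\sup\big(\im H_\zeta+\tilde G_\zeta\big)/\nu$ in~\eqref{def: velocity bd}; the operator $\tilde G_\zeta$ in~\eqref{def: tG} encodes a cancellation between the gain and loss terms of the Lindbladian (note $\tilde G_0=0$). The paper obtains this by using that the deformed semigroup $\beta_{t,\zeta,-\zeta}$ is \emph{positivity preserving} (Proposition~\ref{prop 4.1}(d)), so that $\|\beta_{t,\zeta}(\lambda)\|_1=\Tr(\beta_{t,\zeta}(\lambda))$ for $\lambda\ge 0$ and the derivative of the trace is controlled by a quadratic form, as in~\eqref{ptTr}. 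Your crude norm bound $c(\mu)=2\|K\|+\sum_j\|\tilde W_j\|^2$ discards this cancellation and yields a strictly larger velocity, so it is not ``the explicit form entering~\eqref{def: cmu}.'' Worse, the estimate $\|J\|_1^{op}\le\sum_j\|\tilde W_j\|^2$ is not available: conditions (W)/(AW) only give weak convergence of $\sum_j W_{j,\zeta}^*W_{j,\zeta}$, i.e.\ finiteness of $\big\|\sum_j W_{j,\zeta}^*W_{j,\zeta}\big\|$, while $\sum_j\|W_{j,\zeta}\|^2$ may diverge. The correct bound (the paper's Lemma~\ref{lem: GUV}/Corollary~\ref{cor GUV}, via the operator Cauchy--Schwarz inequality~\eqref{est: 2.38}) involves $\big\|\sum_j W_{j,\zeta}^*W_{j,\zeta}\big\|^{1/2}$ factors, not sums of squared norms.

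Second, the reduction from general disjoint sets to linearly separable ones is a genuine missing step, not a ``secondary geometric point.'' A single direction $\hat n$ with $\phi\le 0$ on $Y$ and $\phi\ge d_{XY}$ on $X$ exists only when $X$ and $Y$ are separated by a slab of width $d_{XY}$. For general sets one must partition $X=\cup_kX_k$, $Y=\cup_jY_j$ into small pieces and choose a different direction for each pair; but then $\hat\chi_Y\rho=\sum_{j_1,j_2}\chi_{Y_{j_1}}\rho\,\chi_{Y_{j_2}}$ produces off-diagonal terms $\chi_{Y_{j_1}}\rho\,\chi_{Y_{j_2}}$ with $j_1\ne j_2$ that are neither positive nor of the form $\hat\chi_V\rho$, so the two-ball estimate does not apply to them directly. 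The paper resolves this with a Cauchy--Schwarz inequality for completely positive maps (Lemma~\ref{lem: 6.3}, proven via the unitary dilation theorem), which dominates each cross term by the geometric mean of two diagonal terms, and then controls the combinatorics of the resulting triple sum (Proposition~\ref{prop: 2.9}), at the cost of the $\epsilon$-losses that appear in~\eqref{def: cmu}. Your closing remark about ``partitioning using supporting hyperplanes'' does not supply this ingredient, and without it the argument only covers the linearly separable case.
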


This theorem is proven in Section~\ref{sec: 2}. {\red By Proposition~\ref{prop: pos cnu} of Appendix~\ref{sec: positivity}, $c(\mu)\geq0$, with $c(\mu)>0$, $ \forall \mu\in (0,a)$, under the condition $L'(x)\neq0$. Here $L'$ is the Heisenberg-Lindblad generator, which is dual to the von Neumann-Lindblad one,~\eqref{L}, and $v=L'(x)$ is the Heisenberg-Lindblad velocity (\cites{Breteaux_2022,Breteaux_2023}), see~\eqref{def: L'} and Appendix~\ref{sec: positivity} below, which is well defined by Lemma~\ref{lem: cnu0} of Appendix~\ref{sec: positivity}. }\st{We conjecture that $c(\mu)>0$. Below, we show this under additional conditions on $H$.} 

For a set $X\subset \Lambda$, let $X^c:= \Lambda-X$. We say that a state $\rho$ is localized in $X$ if in $\rho$, the probability of the system to be in $X$ is equal to $1$:
\begin{align}\label{st-loc}\rho (  \chi_{X})\equiv \Tr( \chi_{X}\rho)=1 \quad \text{ or }\ \quad \rho (  \chi_{X^c})\equiv \Tr( \chi_{X^c}\rho)=0.\end{align}

\begin{corollary}\label{cor 1.2} Assume Conditions~(H),~(W),~(AH) and~(AW). Then, for any $\mu\in (0,a)$ and 
 for any $X,Y\subset \Lambda$ and any DO $\rho$ localized in $X$, the MQOD $\beta_t$ satisfies 
\begin{equation}\label{thm1: eq1}
    \Tr(\chi_Y\beta_t(\rho))\leq Ce^{-2\mu(d_{XY}-ct)}\Tr( \rho)
\end{equation}
for any $c>c(\mu)$ and some constant $C=C_{n,c,\mu}>0$ depending on $n,c,\mu$. 
\end{corollary}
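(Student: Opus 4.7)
The plan is to reduce the corollary to Theorem~\ref{thm1} via two elementary observations: (i) localization of $\rho$ in $X$ implies the operator identity $\rho = \hat\chi_X(\rho)$, and (ii) because $\chi_Y$ is a projection, $\Tr(\chi_Y\sigma) = \Tr(\hat\chi_Y(\sigma))$ for every $\sigma\in\mathcal S_1$.

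For (i), since $\rho\geq 0$ admits a Hilbert--Schmidt square root $\rho^{1/2}$, the hypothesis $\Tr(\chi_{X^c}\rho)=0$ rewrites as $\|\chi_{X^c}\rho^{1/2}\|_2^2 = \Tr(\rho^{1/2}\chi_{X^c}\rho^{1/2}) = 0$. This forces $\chi_{X^c}\rho^{1/2} = 0$, hence $\chi_{X^c}\rho = 0$ and, by taking adjoints, $\rho\chi_{X^c} = 0$. Decomposing $\mathbbm 1 = \chi_X + \chi_{X^c}$ yields $\rho = \chi_X\rho\chi_X = \hat\chi_X(\rho)$. Observation (ii) is immediate from $\chi_Y^2 = \chi_Y$ and the cyclicity of the trace.

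Combining (i) and (ii) with $|\Tr(A)|\leq \|A\|_1$ and $\|\rho\|_1 = \Tr(\rho)$, I obtain
\[\Tr(\chi_Y\beta_t(\rho)) = \Tr\bigl(\hat\chi_Y\beta_t\hat\chi_X(\rho)\bigr) \leq \|\hat\chi_Y\beta_t\hat\chi_X\|_1^{op}\,\Tr(\rho).\]
For disjoint $X$ and $Y$, the desired estimate then follows from Theorem~\ref{thm1} with the roles of $X$ and $Y$ swapped (permissible since $d_{XY} = d_{YX}$). The intersecting case $X\cap Y\neq\emptyset$ gives $d_{XY}=0$ and is handled trivially via the contraction $\Tr(\chi_Y\beta_t(\rho))\leq \Tr(\rho)$ after inflating $C$. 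I do not anticipate a substantive obstacle; the only conceptual step that requires care is the passage from the physical localization condition~\eqref{st-loc} to the algebraic identity $\rho = \hat\chi_X(\rho)$ that feeds Theorem~\ref{thm1}.
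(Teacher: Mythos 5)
Your argument is correct and is precisely the reduction to Theorem~\ref{thm1} that the paper intends (the corollary is stated there without an explicit proof): localization gives $\rho=\hat\chi_X(\rho)$, idempotence of $\chi_Y$ plus cyclicity of the trace give $\Tr(\chi_Y\beta_t(\rho))=\Tr\bigl(\hat\chi_Y\beta_t\hat\chi_X(\rho)\bigr)\le \|\hat\chi_Y\beta_t\hat\chi_X\|_{1}^{op}\,\Tr\rho$, and the theorem applied with the roles of $X$ and $Y$ interchanged (using $d_{XY}=d_{YX}$) finishes the disjoint case. The only caveat is cosmetic: disposing of the case $d_{XY}=0$ by inflating $C$ works only when $c>0$, but this is an artifact of the corollary's statement (which drops the disjointness hypothesis of Theorem~\ref{thm1}) rather than a defect of your proof.
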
\par

For $\Lambda=\mathbb R^n$, the main examples of the quantum Hamiltonian we consider are given by operators of the form
 \begin{align} \label{H}H=\om(p)+V(x)\end{align} 
 acting on $L^2(\mathbb R^n)$. Here $\om(\xi)$ is a real, smooth, positive function on $\mathbb R^n,$ $p:=-i\n$ is the momentum operator 
  and  the potential $V(x)$ is real and $\om(p)$-bounded with the relative bound $<1$, i.e.
\begin{align} \label{V-cond}
& \exists \,0\le a <1,\ b>0: \quad
\|Vu\|\le a  \| \om(p) u\|+ b\|u\|.
\end{align}
These assumptions ensure that $H$ is self-adjoint on the domain of $\om(p)$. \par
Operator~\eqref{H} satisfies (AH) if the function $\omega(k)$ has an analytic continuation, $\omega(\zeta),$ from $\mathbb R^n$ to $S_a^n$ and $\im\, \omega(i\eta)$ is a bounded function $\forall i\eta\in S_a$.\par

An important example of $\omega(k)$ is the relativistic dispersion law $\omega(k)=\sqrt{|k|^2+m^2}$ with $m>0$, or more generally, $\omega(k)=\sum\limits_{j=1}^N \sqrt{|k_j|^2+m_j^2},$ with $k=(k_1,\cdots,k_N), k_j\in \mathbb R^d, m_j>0$. Thus conditions (H) and (AH) are satisfied for the semi-relativistic $N$-particle quantum Hamiltonian (cf.~\cite{SigWu})
\begin{equation}
H=\sum\limits_{j=1}^N \sqrt{|p_j|^2+m_j^2}+V(x),
\end{equation}
where $x=(x_1,\cdots,x_N), x_j\in \mathbb R^d$, and $p_j=-i\nabla_{x_j}$, $j=1,\cdots,N$, and $V(x_1,\cdots,x_N)$ is a standard $N$-body potential. Hence Theorem~\ref{thm1} holds for semi-relativistic $N$-body systems.\par

For $\Lambda=\mathbb Z^n$, an example of the operator $H$ is given by 
\begin{equation}\label{def: HTV}
H=T+V(x),
\end{equation}
where $T$ is a symmetric operator and $V(x)$ is a real, bounded function.\par

Furthermore, Condition~(HA) says that $T$ has exponentially decaying matrix elements $t_{ { x, y}}$, i.e.
\begin{equation}
  |t_{{x, y}}|\leq Ce^{-a|{ x - y}|}, \qquad \text{for some }a>0,
\end{equation}
e.g. the discrete Laplacian $\Delta_{\mathbb Z^n}$ on $\mathbb Z^n$. \par

There are no canonical physical models for $\{W_j\}_{j=1}^{j=\infty}$. Any family of operators $\{W_j\}_{j=1}^{j=\infty}$ satisfying (W) (and (AW) whenever needed) is acceptable.\par

For $\Lambda=\mathbb Z^n$, {the operator-family} $T_\xi$ in Condition~(A) {depends on the $\mathbb Z^n$-equivalence classes of $\xi$'s varying in} the dual (quasimomentum) space $K\equiv  
\mathbb R^n/ \mathbb Z^n$, and $\xi \cdot x$ {could be thought of as a} linear functional on $K$. (For a general lattice $\mathcal L$ in $\mathbb R^n$, the (quasi) momentum space $\mathcal L^*$ is isomorphic to the torus $\mathbb R^n/ \mathcal L'$, where $\mathcal L'$ is the lattice reciprocal to $\mathcal L$.) Furthermore, the strip $S_a^n$ (see Condition A) {could be identified with} $\{\zeta\in {K+i \R^n}\,:\, |{\Im} \zeta_j|<a \, \forall\, j\}$. \par

The second key ingredient in the quantum theory is the notion of observables. Though physical observables are self-adjoint, often unbounded, operators on $\mathcal H$ representing actual physical quantities (say, $p=-i\nabla$ for $\Lambda=\mathbb R^n$), it is convenient mathematically to consider as observables all bounded operators $A\in \mathcal B(\mathcal H)$.\par 

An average of a physical quantity (say, momentum) represented by an observable $A$ in a state $\rho$ is given by $\Tr(A\rho)$. There is a duality between states and observables given by the coupling 
\begin{equation}\label{def: Arho}
  {{ (A,\rho)\equiv \rho(A):=}}\Tr(A\rho),\qquad \forall \, A\in \mathcal B(\mathcal H)\, \text{ and }\, \rho\in \mathcal S_1^+,
\end{equation}
which can be considered as either a linear, positive functional of $A$ or a convex one of $\rho$. In what follows, we use the notation
\begin{equation}
\rho (A):=\Tr( A\rho).
\end{equation}
Here $A\to \rho(A)$ is a linear positive functional on the Banach space, in fact, $C^*$-algebra, $\mathcal B(\mathcal H)$.\par

By the duality,~\eqref{def: Arho}, the von Neumann dynamics yields the Heisenberg one, while the von Neumann-Lindblad dynamics $\beta_t$ of states produces the dynamics $\beta_t'$ of observables as
\begin{equation}\label{def: beta'}
\Tr(\beta_t'(A)\rho)=\Tr(A\beta_t(\rho)).
\end{equation}
Under the Markov assumption~\eqref{beta-t-semigr}, the dynamics $\beta_t'$ has the weak Markov property
\begin{equation}
    \beta_s'\circ \beta_t'=\beta_{s+t}', \qquad \forall\, s,t \geq0, \text{ and }\beta_t'\xrightarrow{w} \mathbbm 1 \text{ as }t\to 0,
\end{equation}
and $A_t=\beta'_t(A)$ is weakly differentiable in $t$ and weakly satisfies the dual Heisenberg-Lindblad (HL) equation (see \cite{OS} and the references therein) 
\begin{equation}
    \p_t A_t=i [H, A_t]+\sum\limits_{j=1}^\infty (W_j^* A_tW_j-\frac{1}{2}\{ W_j^* W_j, A_t\}).\label{eq: ptAt}
\end{equation}
In fact, this equation has a unique strong solution for any initial condition from a dense set in $\mathcal B(\mathcal H)$ (see e.g.~\cite{OS} and Remark~\ref{rem: HL} below).
\begin{theorem}\label{thm:HL-mvb} 
Assume Conditions~(H),~(W),~(AH) and~(AW). Then, for any $\mu\in (0,a)$ and 
 for any two disjoint sets $X$ and $Y$ in $\Lambda$, the dual MQOD $\beta'_t$ satisfies
    \begin{equation}\label{HL-mvb} 
            \|\hat \chi_X\beta'_t\hat \chi_Y\|\leq Ce^{-2\mu(d_{XY}-ct)},
    \end{equation}
    for any $c>c(\mu)$ and some constant $C=C_{n,c,\mu}>0$ depending on $n,c,\mu$.   Here, recall,   $c(\mu)$ is given  by~\eqref{def: cmu}.
\end{theorem}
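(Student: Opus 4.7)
The plan is to obtain this theorem from Theorem~\ref{thm1} by a pure duality argument, exploiting the relation \eqref{def: beta'} that defines $\beta'_t$ as the predual of $\beta_t$ under the trace pairing $\langle A,\lambda\rangle:=\Tr(A\lambda)$ between $\mathcal B(\mathcal H)$ and $\mathcal S_1$. First I would extend the definition \eqref{eq: 4.6} of $\hat\chi_X$ from $\mathcal S_1$ to $\mathcal B(\mathcal H)$ by the same formula $\hat\chi_X(A)=\chi_X A\chi_X$. Because $\chi_X$ is a self-adjoint (in fact orthogonal) projection and since $\Tr(\chi_X A\chi_X \lambda)=\Tr(A\chi_X\lambda\chi_X)$, this extended $\hat\chi_X$ on $\mathcal B(\mathcal H)$ is the Banach-space dual of $\hat\chi_X$ on $\mathcal S_1$; the same holds for $\hat\chi_Y$.

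Next I would use \eqref{def: beta'} and the self-duality of $\hat\chi_X,\hat\chi_Y$ to compute, for every $A\in\mathcal B(\mathcal H)$ and $\rho\in\mathcal S_1$,
\begin{align*}
\Tr\!\bigl[(\hat\chi_X\beta'_t\hat\chi_Y)(A)\,\rho\bigr]
&=\Tr\!\bigl[\beta'_t\hat\chi_Y(A)\,\hat\chi_X(\rho)\bigr]\\
&=\Tr\!\bigl[\hat\chi_Y(A)\,\beta_t\hat\chi_X(\rho)\bigr]\\
&=\Tr\!\bigl[A\,(\hat\chi_Y\beta_t\hat\chi_X)(\rho)\bigr].
\end{align*}
This identifies $\hat\chi_X\beta'_t\hat\chi_Y\in\mathcal B(\mathcal B(\mathcal H))$ as the Banach-space dual of $\hat\chi_Y\beta_t\hat\chi_X\in\mathcal B(\mathcal S_1)$ under the canonical identification $\mathcal S_1^{*}=\mathcal B(\mathcal H)$. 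Since an operator and its dual have the same operator norm,
\begin{equation*}
\|\hat\chi_X\beta'_t\hat\chi_Y\|=\|\hat\chi_Y\beta_t\hat\chi_X\|_{1}^{op}.
\end{equation*}

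The conclusion then follows by applying Theorem~\ref{thm1} with the roles of $X$ and $Y$ interchanged, using $d_{YX}=d_{XY}$, so the exponent and the constants $c,\mu,C$ are the same. The main work lies entirely in Theorem~\ref{thm1}; the only points here that need any care are (i) verifying that the extended map $\hat\chi_X$ on $\mathcal B(\mathcal H)$ is indeed predual to the original one on $\mathcal S_1$ (trivial once written out), and (ii) noting that although $\beta'_t$ is initially produced on a dense subspace via the HL equation \eqref{eq: ptAt}, relation \eqref{def: beta'} extends it by weak-$*$ continuity to a bounded map on all of $\mathcal B(\mathcal H)$ with the same norm as $\beta_t$ on $\mathcal S_1$, which is all that the duality step requires. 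Once these identifications are in place, no further analysis is needed.
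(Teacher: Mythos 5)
Your proposal is correct and follows essentially the same route as the paper: the paper's Lemma 1.6 derives Theorem~\ref{thm:HL-mvb} from Theorem~\ref{thm1} by exactly this duality argument, using the norm relation \eqref{eq: norm-A} together with \eqref{def: beta'} and the self-duality of $\hat\chi_X$, $\hat\chi_Y$ under the trace pairing. Your identification of $\hat\chi_X\beta'_t\hat\chi_Y$ as the Banach dual of $\hat\chi_Y\beta_t\hat\chi_X$ (with $d_{XY}=d_{YX}$) is just a slightly more packaged statement of the same computation.
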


\begin{lemma} Theorem~\ref{thm:HL-mvb} is equivalent to Theorem~\ref{thm1}. 
\end{lemma}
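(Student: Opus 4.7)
The plan is to exploit the duality between $\mathcal S_1$ and $\mathcal B(\mathcal H)$ under the trace pairing. By definition~\eqref{def: beta'}, the map $\beta_t'$ on $\mathcal B(\mathcal H)$ is the Banach-space adjoint of the map $\beta_t$ on $\mathcal S_1$ under the pairing $\langle A,\rho\rangle :=\Tr(A\rho)$, i.e. $\beta_t^*=\beta_t'$.

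The first step is to observe that the multiplier $\hat\chi_X$ is self-dual under this pairing: for any $A\in\mathcal B(\mathcal H)$ and $\rho\in\mathcal S_1$, cyclicity of the trace and $\chi_X^2=\chi_X$ give
\begin{equation*}
\Tr\bigl(A\,\hat\chi_X(\rho)\bigr)=\Tr(A\chi_X\rho\chi_X)=\Tr(\chi_X A\chi_X\,\rho)=\Tr\bigl(\hat\chi_X(A)\,\rho\bigr).
\end{equation*}
Hence $\hat\chi_X^*=\hat\chi_X$ (the former acting on $\mathcal S_1$, the latter on $\mathcal B(\mathcal H)$). Taking adjoints of the composition and reversing order yields
\begin{equation*}
\bigl(\hat\chi_X\,\beta_t\,\hat\chi_Y\bigr)^{*}=\hat\chi_Y^{*}\,\beta_t^{*}\,\hat\chi_X^{*}=\hat\chi_Y\,\beta_t'\,\hat\chi_X.
\end{equation*}

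Next I invoke the standard fact that for a bounded linear map $T\colon\mathcal S_1\to\mathcal S_1$ one has $\|T\|_{1}^{op}=\|T^{*}\|$, where $T^{*}\colon\mathcal B(\mathcal H)\to\mathcal B(\mathcal H)$ is the dual map with respect to $\langle\cdot,\cdot\rangle$; this follows from $\mathcal S_1^{*}=\mathcal B(\mathcal H)$ and the characterization of the norm of a trace-class operator by duality against bounded operators. Applied to $T=\hat\chi_X\beta_t\hat\chi_Y$, this gives
\begin{equation*}
\|\hat\chi_X\,\beta_t\,\hat\chi_Y\|_{1}^{op}=\|\hat\chi_Y\,\beta_t'\,\hat\chi_X\|.
\end{equation*}

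Finally, the right-hand side of the bounds~\eqref{eq: 4.7} and~\eqref{HL-mvb} depends on $X,Y$ only through $d_{XY}=d_{YX}$, so the inequality for $\|\hat\chi_Y\beta_t'\hat\chi_X\|$ is the same statement as the inequality for $\|\hat\chi_X\beta_t'\hat\chi_Y\|$ after relabeling. Combining the displayed identity with this symmetry shows that Theorem~\ref{thm1} and Theorem~\ref{thm:HL-mvb} assert exactly the same estimate with the same constants $C,c(\mu)$, establishing the equivalence. There is no real obstacle here; the only point to be careful about is that the norm-duality $\|T\|_1^{op}=\|T^*\|$ is being used in the (slightly non-Hilbertian) Banach-space sense associated with the $\mathcal S_1$-$\mathcal B(\mathcal H)$ pairing, which should be stated explicitly rather than glossed over.
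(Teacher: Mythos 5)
Your proof is correct and follows essentially the same route as the paper: both rest on the $\mathcal S_1$--$\mathcal B(\mathcal H)$ trace duality, the paper unwinding it through the two norm formulas $\|A\|=\sup_{\Tr\rho=1}|\Tr(A\rho)|$ and $\|\lambda\|_1=\sup_{\|A\|=1}|\Tr(A\lambda)|$, while you package the same content as the adjoint identity $\|T\|_{1}^{op}=\|T^*\|$ together with $(\hat\chi_X\,\beta_t\,\hat\chi_Y)^{*}=\hat\chi_Y\,\beta_t'\,\hat\chi_X$ and the symmetry of $d_{XY}$. If anything your formulation is marginally cleaner, since the paper's supremum over positive normalized $\rho$ literally computes the numerical radius rather than the operator norm for non-self-adjoint $A$, a small imprecision that your use of the full Banach-space duality sidesteps.
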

\begin{proof}  Theorem~\ref{thm1} and the relation (see \cite{Schatten}, Chapter IV, Section 1, Theorem 2)
\begin{equation}\label{eq: norm-A}
    \|A\|=\sup\limits_{\rho\in \mathcal S_1^+,\, \Tr\rho=1}|\Tr(A\rho)|
\end{equation}
imply Theorem~\ref{thm:HL-mvb}. In the opposite direction, Theorem~\ref{thm:HL-mvb} and the relation
\begin{equation}
    \|\lambda\|_1=\sup\limits_{A\in \mathcal B,\, \|A\|=1}|\Tr(A\lambda)|\label{easy-inequal}
    \end{equation}
proven below, imply Theorem~\ref{thm1}. To prove~\eqref{easy-inequal}, we notice that, by the polar decomposition, $\|\lambda\|_1=\Tr(\lambda U)$ for every $\lambda\in \mathcal S_1$ and some unitary operator $U$, we have $\|\lambda\|_1\leq \sup\limits_{A\in \mathcal B,\, \|A\|=1} |\Tr(A\lambda)|.$ On the other hand, we have the standard inequality
\begin{equation}
   |\Tr(A\lambda)|\leq \|A\|\|\lambda\|_1.
\end{equation}
These two relations imply~\eqref{easy-inequal}.\end{proof}

  \begin{remark}\label{rem: HL}
The HL generator $L'$ on the r.h.s. of~\eqref{eq: ptAt} can be written as 
    \begin{equation}\label{def: L'}
    L'A=i[H,A]+\psi'(A)-\frac{1}{2}\left\{\psi'(\mathbbm1), A\right\},
    \end{equation}
    where $\psi'$ is a completely positive map on $\mathcal B$, which, by the Krauss' theorem, is of the form 
    \begin{equation}\label{psi'}    \psi'(A)=\sum\limits_{j=1}^\infty W_j^*A W_j, 
    \end{equation}
    for some bounded operators $W_j,j=1,2,\cdots,$ satisfying (W). 
\end{remark}
This representation allows for an easy proof of existence of mild and strong solutions to Eq.~\eqref{eq: ptAt}. Indeed,~\eqref{eq: ptAt} can be written as $\partial_t A_t=L'A_t$, with the operator $L'$ given by~\eqref{def: L'}. Furthermore, $L'$ can be written as $L'=L_0'+G'$, where $L_0'A=i[H,A]$ and 
\begin{equation}\label{G'}
G'(A):=\psi'(A)-\frac{1}{2}\{\psi'(\mathbbm1), A \}.
\end{equation}
Now, we show boundedness of the map $G'$. Indeed, the operator $\psi'(\mathbbm1)=\sum\limits_{j=1}^\infty W_j^*W_j$ is bounded, by Condition~(AW), and positive. Furthermore, the map $\psi'$ is positive and therefore $\| \psi'(A)\|\leq \psi'(\mathbbm1) \|A\|$, for any self-adjoint operator $A$, which follows by applying $\psi'$ to the operator $B=\|A\|\mathbbm1-A\geq 0$. One can extend this bound to non-self-adjoint operators to obtain 
\begin{equation}
\|G'(A)\|\leq 2 \psi'(\mathbbm1)\|A\|.\label{ineq: G'A}
\end{equation}\par By the explicit representation $e^{L_0't}A=e^{iHt}Ae^{-iHt}$, the operator $L_0'$ generates a one-parameter group $\alpha_t'=e^{tL_0'}$ of isometries on $\mathcal B$ (Heisenberg evolution), and therefore, since $G'$ is bounded, by a standard perturbation theory, $L'$ generates a one-parameter group of bounded operators, $\beta_t'=e^{tL'}$. \par

Since $\beta_t'=e^{L't}$ is (completely) positive (by the original assumption on $\beta_t$) and unital ($e^{L't}\mathbbm1=\mathbbm1$), as follows from $L'\mathbbm1=0$, we have $\|e^{L't}\|\leq 2$. (In the opposite direction, one can prove the complete positivity of $e^{L't}$ by using Eq.~\eqref{def: L'}, see~\cite{OS} and references therein.)\par
\begin{remark}
    Using~\eqref{def: L'}, one can formulate the HLE on an abstract von Neumann algebra with unity. We expect that Theorem~\ref{thm:HL-mvb} can be extended to this setting with $x$ replaced by a self-adjoint affiliated with the algebra. 
\end{remark}

\begin{remark}If one thinks of the algebra of observables $\mathcal B\equiv\mathcal B(\mathcal H)$ and the Heisenberg (resp. Heisenberg-Lindblad) dynamics on it as primary objects, then one might define the state space as the dual $\mathcal B'$ of $\mathcal B$ with the dynamics given by the von Neumann (resp. von Neumann-Lindblad) dynamics. Then $\mathcal S_1$ is a proper, closed subspace of $\mathcal B'$ (see~\cite{Schatten}, Chapter IV, Theorems 1 and 5) invariant under the von Neumann and von Neumann-Lindblad dynamics. By restricting the von Neumann dynamics further to the invariant subspace of $\mathcal S_1$ of rank $1$ orthogonal projections one arrives at a formulation equivalent to the standard quantum mechanics. For closed systems, the latter extends uniquely to von Neumann dynamics on $\mathcal S_1$ and then on $\mathcal B'$. For open systems, this is not true any more: the minimal state space for the vNL dynamics is $\mathcal S_1$.   
\end{remark}

\subsection{Comparison with earlier results and description of the approach}

Bounds of the form of~\eqref{thm1: eq1} but with a power decay were obtained in~\cites{Breteaux_2022,Breteaux_2023}.  {\red (Note that ~\cites{Breteaux_2022,Breteaux_2023} treat a more general class of vNLE including non-analytic ones (long-range in the condensed matter physics terminology) for which exponential bounds do not hold.)} For the von Neumann evolution, \eqref{vNE}, where 
the key estimates reduce to estimating the Schr\"odinger unitary, $e^{-iHt}$, a result similar to Theorems~\ref{thm1} 
 was proven in~\cite{SigWu}. 




Presently, there are three approaches to proving light-cone estimates.  The first approach going back to Lieb and Robinson (see~\cite{NachOgS2} for a review) is based on a perturbation (Araki-Dyson-type) expansion.\par

In the second approach, one constructs special observables (adiabatic, space-time, local observables or ASTLO) which are monotonically decreasing along the evolution up to self-similar and time-decaying terms (recursive monotonicity). Originally designed for the scattering theory in quantum mechanics in \cite{SigSof2} and extended in \cites{AFPS,APSS,BoFauSig, FrGrSchl, HeSk, HunSigSof, SchSurvey, Sig, Skib}, this approach was developed in the many-body theory context (\cites{FLS1, FLS2,LRSZ, SZ}) proving light-cone bounds on the propagation in bose gases, the problem which was open since the groundbreaking work of Lieb and Robinson (\cite{LR}) in 1972.\par


In this paper, we develop 
the third approach, initiated in~\cite{SigWu} (see also \cites{CJWW, FLSZ}). 
Specifically, we reduce the problem of proving space-time estimates on solutions to vNLE to constructing analytic deformations of the evolution $\beta_t=e^{Lt}$ and estimating these deformations as well as the geometrical factors $\chi_U(x) e^{-i\z\cdot x}$ for $\z\in S_a^n$ and various domains $U\subset \Lam$.  {\red (We also use the Araki-Dyson expansion, but we do this for a technical step of showing that the deformed von Neumann-Lindblad operator generates a semigroup, which probably could be bypassed.)}



In the process, we construct a theory of analytic deformations of the vNLE (or $\beta_t$) and expand the analytical toolbox for dealing with maps on operator spaces. \st{including  estimates on generalizations of completely positive maps of the form} 
\st{introduced in this paper, which, for want of a better term, we call sub-completely positive maps.}\par

This paper is organized as follows. In Section~\ref{sec: 2}, we prove Theorem~\ref{thm1}, modulo two propositions which are proven in Section~\ref{sec: pf prop}, after we demonstrate some inequalities for completely positive (quantum) and related maps in Section~\ref{sec: 3}. 
 In Appendix~\ref{sec:vNLEexist}, we sketch an existence theory for vNLE. 
  Section~\ref{sec: 2} could also serve as a sketch of the proof of Theorem~\ref{thm1}.

\section{Proof of Theorem~\ref{thm1} given Propositions~\ref{prop 4.1} and~\ref{prop: 2.9}}\label{sec: 2}

Recall our convention that $\lambda, \mu\in \mathcal S_1$ and $\rho\in \mathcal S_1^+$. For $\xi,\eta\in \mathbb R^n$, we let $T_{\xi,\eta}\lambda=T_\xi \lambda T_\eta^{-1}$, with the 'left' and 'right' sides of $\lambda$ treated differently, $L_{\xi,\eta}=T_{\xi,\eta} L T_{\xi,\eta}^{-1} $ and $ \beta_{t,\xi,\eta}:=T_{\xi,\eta} \beta_t T_{\xi,\eta}^{-1}$. Since $T_\xi$ is a unitary group (of multiplication operators by $e^{-i\xi\cdot x}$) on $L^2(\Lam)$, $T_{\xi,\eta}$ is a group of isometries on $\mathcal S_1$ and $L_{\xi,\eta}$ and $\beta_{t,\xi,\eta}$ are isometric deformations of $L$ and $\beta_t$. Furthermore, we define $\hat R \lambda=(H+i)^{-1}\lambda(H-i)^{-1}$ and $\hat R(\mathcal S_1)=\{\hat R(\lambda)\,:\, \lambda\in \mathcal S_1\}$.

We assemble all technical results needed in the proof of Theorem~\ref{thm1} in the following proposition proven  in Section~\ref{sec: pf 2.1}.

\begin{prop}\label{prop 4.1} Assume Conditions~(H),~(W),~(AH) and~(AW). Consider the family operators $L_{\zeta, \theta}$ on $\hat R(\mathcal S_1)$ of the form 
\begin{equation}
    L_{\zeta,\theta}=L_{0,\zeta, \theta}+G_{\zeta, \theta},  \label{eq.form}
    \end{equation}
    with the operators $L_{0,\zeta, \theta}$ and $G_{\zeta, \theta}$ given by
    \begin{equation}\label{eq: L0zeta}
    L_{0,\zeta, \theta}\lambda:=-i(H_\zeta \lambda-\lambda H_ \theta),
    \end{equation}
    \begin{equation}
        G_{\zeta, \theta}\lambda:=\sum\limits_{j=1}^\infty  \left( W_{j,\zeta}\lambda W_{j,\bar{ \theta}}^*-\frac{1}{2}W_{j,\bar{\zeta}}^*W_{j,\zeta}\lambda-\frac{1}{2}\lambda W_{j,\bar{ \theta}}^*W_{j, \theta}\right),\label{eq.Grho}
    \end{equation}
where $H_\zeta,$ $W_{j,\zeta}$ and $W_{j,\bar \zeta}^*\equiv(W_j^*)_\zeta$ are analytic continuations of $H_\xi$, $W_{j,\xi}$ and $W_{j,\xi}^*$. Then, we have the following statements:\par

(a) $L_{0,\zeta, \theta}$ and $G_{\zeta, \theta}$ are bounded maps from $\hat R(\mathcal S_1)$ to $\mathcal S_1$ and on $\mathcal S_1$, respectively, $\forall \zeta, \theta\in S_a^n$.

(b) $L_{\zeta, \theta}\hat R$ is an analytic continuation (as a family of bounded operators) of $L_{\xi,\tilde \xi}\hat R$  from $\mathbb R^n\times \mathbb R^n$ to $S_a^n\times S_a^n$.

(c) $L_{\zeta, \theta}$ generates the family of bounded one-parameter groups $\beta_{t,\zeta, \theta}=e^{L_{\zeta, \theta}t}$ and the latter family is analytic in $\zeta, \theta\in S_a^n$. 

    (d)  $\beta_{t,\zeta,-\zeta}$ is positivity preserving on $\mathcal S_1$.     
\end{prop}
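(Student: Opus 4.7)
The plan is to treat (a) and (b) as bookkeeping built on the inputs from (AH), (AW) and the sub-CP estimates of Section~\ref{sec: 3}, and to treat (c) and (d) as semigroup-theoretic consequences---(c) via bounded perturbation of a ``Heisenberg-like'' group, (d) via a special adjoint structure that is available only when $\tilde\zeta=-\zeta$ and $\zeta$ is purely imaginary.

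For (a) and (b), the key inputs are: by (AH), $H_\zeta(H+i)^{-1}$ is bounded and analytic in $\zeta\in S_a^n$, and by Schwarz reflection $H_\zeta^*=H_{\bar\zeta}$ one obtains boundedness of $(H-i)^{-1}H_{\tilde\zeta}$ as the adjoint of $H_{\overline{\tilde\zeta}}(H+i)^{-1}$. Writing $\lambda=\hat R(\mu)$, I would bound each term of $L_{0,\zeta,\tilde\zeta}\lambda$ by inserting these factors between $\mu$ and the resolvents. For $G_{\zeta,\tilde\zeta}$ I would apply the sub-CP bound
\[
\Bigl\|\sum_j U_j\lambda V_j^*\Bigr\|_1\le \Bigl\|\sum_j U_j^*U_j\Bigr\|^{1/2}\Bigl\|\sum_j V_j^*V_j\Bigr\|^{1/2}\|\lambda\|_1
\]
to the first term of $G_{\zeta,\tilde\zeta}$, and the Cauchy--Schwarz variant $\|\sum_j A_j^*B_j\|_{op}\le\|\sum A_j^*A_j\|^{1/2}\|\sum B_j^*B_j\|^{1/2}$ to the two anticommutator-type terms. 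All of the relevant sums are bounded by (AW) uniformly on compact subsets of $S_a^n\times S_a^n$, which yields part~(a); uniform convergence of the series together with termwise analyticity (from (AH), (AW)) then yields analyticity of $L_{\zeta,\tilde\zeta}\hat R$, part~(b), via the operator-valued Weierstrass theorem.

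For (c), I would first observe that $\im H_\zeta$ is a bounded operator for every $\zeta\in S_a^n$: condition~\eqref{con: AH} asserts this for purely imaginary $\zeta$, and the identity $H_{\xi+i\eta}=T_\xi H_{i\eta}T_\xi^{-1}$ (with $T_\xi$ unitary) transports the boundedness to general $\zeta=\xi+i\eta$. Hence $\pm iH_\zeta$ generate $C^0$ semigroups on $\mathcal H$ of norm $\le e^{\|\im H_\zeta\|t}$, so $L_{0,\zeta,\tilde\zeta}$ generates the explicit group $\lambda\mapsto e^{-iH_\zeta t}\lambda e^{iH_{\tilde\zeta}t}$ on $\mathcal S_1$. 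The bounded perturbation $G_{\zeta,\tilde\zeta}$ from (a) then supplies $\beta_{t,\zeta,\tilde\zeta}=e^{L_{\zeta,\tilde\zeta}t}$ via Phillips' theorem, and the Dyson expansion around $L_{0,\zeta,\tilde\zeta}$ delivers analyticity in $(\zeta,\tilde\zeta)$ term by term, the series converging uniformly on compacts.

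Part (d) is the main obstacle. I would focus on $\zeta=i\eta$ purely imaginary (the case actually used in the proof of Theorem~\ref{thm1}), where $\bar\zeta=-\zeta$ forces the crucial identities $H_{-\zeta}=H_\zeta^*$, $W_{j,\bar\zeta}^*=W_{j,-\zeta}^*$ and $W_{j,\bar{\tilde\zeta}}^*=W_{j,\zeta}^*$. These allow the splitting $L_{\zeta,-\zeta}=L_1+L_2$ with
\[
L_2\lambda=\sum_j W_{j,\zeta}\lambda W_{j,\zeta}^*,\qquad L_1\lambda=X\lambda+\lambda X^*,\quad X:=-iH_\zeta-\tfrac12\sum_j W_{j,-\zeta}^*W_{j,\zeta},
\]
where one verifies that $X^*=iH_\zeta^*-\tfrac12\sum_j W_{j,\zeta}^*W_{j,-\zeta}$ reproduces exactly the right-action part of $L_{\zeta,-\zeta}$---a matching that fails for $\zeta$ not purely imaginary. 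Then $e^{L_1t}\lambda=e^{Xt}\lambda e^{X^*t}$ manifestly preserves the positive cone (decompose $\lambda\ge 0$ spectrally), and $L_2$ is a bounded completely positive map so $e^{L_2t}$ preserves it too. The Trotter product formula, applicable because $L_2$ is bounded and $L_1$ generates a $C^0$ semigroup, writes $\beta_{t,\zeta,-\zeta}$ as a strong limit of compositions of positivity-preserving maps, giving the claim since $\mathcal S_1^+$ is closed in trace norm. The delicate parts of the argument are the algebraic verification of the splitting, which hinges entirely on the pure-imaginary hypothesis, and the applicability of Trotter in the $C^0$ setting on $\mathcal S_1$.
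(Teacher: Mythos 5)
Your proposal is correct, and parts (a)--(c) follow essentially the paper's own route: boundedness from (AH) plus the sub-completely-positive estimates for $G_{\zeta,\tilde\zeta}$, generation of the group by splitting off the isometric part $e^{-i\re H_\zeta t}(\cdot)e^{i\re H_{\tilde\zeta}t}$ and treating $\im H_\zeta$ and $G_{\zeta,\tilde\zeta}$ as a bounded perturbation, and an Araki--Dyson/Duhamel argument for analyticity. One small point in (c): your ``term-by-term'' analyticity of the Dyson series presupposes that the unperturbed group $e^{L_{0,\zeta,\tilde\zeta}t}$ is itself analytic in $(\zeta,\tilde\zeta)$; the paper devotes a separate step to this (approximating $H_\zeta$ by the bounded operators $H_\zeta(ia)(H+ia)^{-1}$), and you should make that step explicit rather than fold it into the Weierstrass argument.

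Part (d) is where you genuinely diverge. The paper proves positivity by analytically continuing the deformation identity, i.e.\ it writes $\langle\psi,\beta_{t,\zeta}(\rho)\psi\rangle=\langle T_\zeta\psi,\beta_t(\nu)T_\zeta\psi\rangle\ge 0$ with $\nu=T_{-\zeta}\rho T_{-\zeta}$, valid on the dense set of $\rho\in\mathcal S_1^+$ for which $\nu$ is trace class, and then concludes by density and boundedness of $\beta_{t,\zeta}$. You instead exploit the intrinsic Lindblad structure of $L_{\zeta,-\zeta}$ for purely imaginary $\zeta$: the splitting $L_{\zeta,-\zeta}\lambda=(X\lambda+\lambda X^*)+\sum_j W_{j,\zeta}\lambda W_{j,\zeta}^*$ with $X=-iH_\zeta-\frac12\sum_j W_{j,-\zeta}^*W_{j,\zeta}$, whose algebra I have checked and which does hinge exactly on $H_{-\zeta}=H_\zeta^*$ and $W_{j,\bar{\tilde\zeta}}^*=W_{j,\zeta}^*$ as you say, followed by Trotter (or, equivalently, the observation that every term of the Dyson series around $e^{L_1t}$ is a composition of positivity-preserving maps). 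Your route is the standard one for Lindbladians; it is self-contained, yields complete positivity rather than mere positivity, and avoids the paper's unproved density claim for $\{\rho\in\mathcal S_1^+ : T_{-\zeta}\rho T_{-\zeta}\in\mathcal S_1\}$ and the justification of the continuation identity. Its cost is the verification that $X$ generates a $C^0$ semigroup on $\mathcal H$, but that reduces to the same decomposition of $H_\zeta$ into a self-adjoint part plus a bounded part already needed in (c), so nothing new is required. Both arguments are sound.
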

   

Using that $T_{-\eta}=T_\eta^{-1}$ and $ T^{-1}_{\xi,\eta}=T_{-\xi,-\eta}$, we write $T_{\xi,\eta}\lambda:=T_\xi\lambda T_\eta^{-1}=T_\xi\lambda T_{-\eta}$ and $\beta_{t,\xi,\eta}:= T_{\xi,\eta} \beta_t T_{-\xi,-\eta} $. We have  
\begin{lemma} For any bounded sets $U,V\subset \Lambda$ and any $\zeta,  \theta\in S_a^n$, we have
\begin{equation}
     \hat \chi_{U} \beta_t \hat \chi_{V} =\hat \chi_{U}  T_{-\zeta, - \theta} \beta_{t,\zeta, \theta}T_{\zeta, \theta}\hat \chi_{V}, \quad\text{$\zeta, \theta\in S_{a}^n$}.\label{eq: XbtY0}
\end{equation} 
\end{lemma}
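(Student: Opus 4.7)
The plan is to verify~\eqref{eq: XbtY0} first for real arguments $(\xi,\tilde\xi)\in\mathbb R^n\times\mathbb R^n$, where it is essentially tautological, and then extend to complex $(\zeta,\tilde\zeta)\in S_a^n\times S_a^n$ by analytic continuation of an operator-valued function in $\mathcal B(\mathcal S_1)$.

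For real $\xi,\tilde\xi$, the operator $T_\xi$ is unitary on $\mathcal H$, so $T_{\xi,\tilde\xi}$ is a bijective isometry on $\mathcal S_1$ with inverse $T_{-\xi,-\tilde\xi}$. Inverting the definition $\beta_{t,\xi,\tilde\xi}=T_{\xi,\tilde\xi}\beta_tT_{-\xi,-\tilde\xi}$ gives $\beta_t=T_{-\xi,-\tilde\xi}\beta_{t,\xi,\tilde\xi}T_{\xi,\tilde\xi}$, and composing on the left with $\hat\chi_U$ and on the right with $\hat\chi_V$ yields~\eqref{eq: XbtY0} on $\mathbb R^n\times\mathbb R^n$.

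To extend to complex arguments, set
$$
F(\zeta,\tilde\zeta):=\hat\chi_U\,T_{-\zeta,-\tilde\zeta}\,\beta_{t,\zeta,\tilde\zeta}\,T_{\zeta,\tilde\zeta}\,\hat\chi_V,\qquad(\zeta,\tilde\zeta)\in S_a^n\times S_a^n,
$$
regarded as a $\mathcal B(\mathcal S_1)$-valued map, and show that it is norm-analytic. The middle factor $\beta_{t,\zeta,\tilde\zeta}$ is a bounded operator on $\mathcal S_1$, analytic in $(\zeta,\tilde\zeta)$ on $S_a^n\times S_a^n$ by Proposition~\ref{prop 4.1}(c). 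For the outer factors, the key observation is that, although the multiplication operators $e^{\pm i\zeta\cdot x}$ are unbounded on $\mathcal H$ when $\im\zeta\neq 0$, the products $e^{-i\zeta\cdot x}\chi_V$ and $\chi_U e^{i\zeta\cdot x}$ are bounded multiplication operators on $\mathcal H$ precisely because $U$ and $V$ are bounded; their norms are controlled by $e^{|\im\zeta|\sup_{x\in U\cup V}|x|}$. Moreover, the exponential series expansion converges in operator norm uniformly on compact subsets of $\mathbb C^n$, making these operator-valued maps entire in $\zeta$. Consequently $T_{\zeta,\tilde\zeta}\hat\chi_V$ and $\hat\chi_UT_{-\zeta,-\tilde\zeta}$ are norm-analytic bounded maps on $\mathcal S_1$, and $F$, being the composition of three such maps, is norm-analytic on $S_a^n\times S_a^n$.

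Since the left-hand side of~\eqref{eq: XbtY0} is independent of $(\zeta,\tilde\zeta)$, the map $G(\zeta,\tilde\zeta):=F(\zeta,\tilde\zeta)-\hat\chi_U\beta_t\hat\chi_V$ is norm-analytic on the connected polydomain $S_a^n\times S_a^n$ and vanishes on the totally real slice $\mathbb R^n\times\mathbb R^n$ by the first step. Applying the one-variable identity theorem for holomorphic functions coordinate by coordinate (each coordinate ranges over the connected strip $|\im\zeta_j|<a$) forces $G\equiv 0$, yielding~\eqref{eq: XbtY0} on $S_a^n\times S_a^n$. The main obstacle is the verification of boundedness and norm analyticity of the outer factors for complex arguments, which rests entirely on the hypothesis that $U$ and $V$ are bounded subsets of $\Lambda$; the analyticity of the middle factor is granted by Proposition~\ref{prop 4.1}.
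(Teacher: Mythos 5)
Your proposal is correct and follows essentially the same route as the paper: establish the identity for real $(\xi,\tilde\xi)$ by inserting $T_{-\xi,-\tilde\xi}T_{\xi,\tilde\xi}=\mathbbm1$, then use boundedness of $U,V$ (for the outer factors) and Proposition~\ref{prop 4.1}(c) (for $\beta_{t,\zeta,\tilde\zeta}$) to continue the right-hand side analytically to $S_a^n\times S_a^n$, the left-hand side being constant. Your explicit appeal to the identity theorem on the totally real slice merely spells out what the paper leaves implicit in the phrase ``we can continue the right-hand side analytically.''
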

\begin{proof} Using that $T_{-\xi,-\eta}T_{\xi,\eta}=\mathbbm1$, we write
\begin{align}
    \hat \chi_U \beta_t \hat \chi_V =&\hat \chi_U  T_{-\xi,-\eta} T_{\xi,\eta} \beta_{t} T_{-\xi,-\eta}  T_{\xi,\eta}\hat \chi_{V}\nonumber\\
    =&\hat \chi_U T_{-\xi,-\eta} \beta_{t,\xi,\eta}T_{\xi,\eta}\hat \chi_{V}.
\end{align}
By Conditions~(AH) and~(AW) and the facts, that for $U$ and $V$ bounded, the operators $\hat \chi_U T_{-\zeta, - \theta}$ and $T_{\zeta, \theta}\hat \chi_V$ are bounded and analytic for $\zeta, \theta\in S_a^n$, we can continue the right-hand side analytically in $\xi$ and $\eta$ from $\mathbb R^n$ to $\mathcal S_{a}^n$ to obtain~\eqref{eq: XbtY0}.\end{proof}

\eqref{eq: XbtY0} is our key relation, a basis of our estimates. The idea behind this relation is related to the Combes-Thomas argument (\cite{AW}, see~\cite{CT}, for a book presentation and extensions).\par

Now, we estimate~$\hat \chi_U \beta_t\hat \chi_V $ for $U$ and $V$ arbitrary disjoint sets in $\Lambda$ and then, using partitions of unity, we obtain the desired estimate of $\hat \chi_X \beta_t\hat \chi_Y$.\par

 We denote $\beta_{t,\zeta}:=\beta_{t,\zeta,-\zeta}$. Using the relation  
\begin{equation}
    \hat \chi_U T_{\zeta,-\zeta}(\lambda)=\chi_UT_{\zeta} \lambda T_{\zeta}\chi_U,
\end{equation}
we estimate 
\begin{equation}\label{eq: 7.5}
    \|\hat \chi_U T_{\zeta,-\zeta}(\lambda)\|_{1}\leq \| \chi_UT_{\zeta}\|^2 \|\lambda\|_{1}, \text{ for any }\lambda\in \mathcal S_1.
\end{equation}
Using~\eqref{eq: XbtY0}, together with~\eqref{eq: 7.5}, we obtain
\begin{equation}
\begin{aligned}
    \|\hat \chi_U\beta_t\hat\chi_{V}(\lambda) \|_{1}=& \| \chi_UT_{-\zeta}\left(\beta_{t,\zeta}\hat\chi_{V} T_{\zeta,-\zeta}(\lambda) \right) \chi_UT_{-\zeta}\|_{1}\\
    \leq& \|\chi_U T_{-\zeta}\|^2 \|\beta_{t,\zeta} \|_{1}^{op}\| \hat \chi_V T_{\zeta,-\zeta}\lambda\|_{1}\\
    \leq & \|\chi_U T_{-\zeta}\|^2\|\chi_V T_{\zeta}\|^2 \|\beta_{t,\zeta} \|_{1}^{op} \| \lambda\|_{1},
\end{aligned}
\end{equation}
where, recall, $\|\cdot\|_{1}^{op}$ denotes the norm of operators on $\mathcal S_1$, which implies 
\begin{equation}
    \|\hat \chi_U \beta_t \hat \chi_V\|_{1}^{op}\leq \|\chi_U T_{-\zeta}\|^2\|\chi_V T_{\zeta}\|^2 \|\beta_{t,\zeta} \|_{1}^{op}.\label{eq: 7.8}
\end{equation}

Now we estimate the norms on the r.h.s. of~\eqref{eq: 7.8} beginning with $\|\beta_{t,\zeta} \|_1^{op}$. For a self-adjoint operator $A$, we denote 
\begin{equation}
\sup A=\sup\limits_{\psi\in \mathcal D(A), \, \|\psi\|=1}\langle\psi, A\psi \rangle.
\end{equation}
\begin{prop}\label{prop: beta acbd} 
Let $\zeta=i\eta, \eta\in \mathbb R^n$, $|\eta|=\nu$, $\nu\in (0,a)$. Then we have
    \begin{equation}
        \|\beta_{t,\zeta} \|_1^{op}\leq 4e^{2\nu c'(\nu)t},\label{beta acbd}
    \end{equation}
where the parameter function $c'(\nu)$ is given by 
\begin{equation}
c'(\nu):=\sup\limits_{\zeta=i\eta,|\eta|=\nu} \sup\left(\im H_\zeta+\tilde G_\zeta\right)/\nu.\label{def: velocity bd}
\end{equation}
Here $\tilde G_\zeta$ is the bounded, self-adjoint operator on $\mathcal H$ given by 
\begin{equation}
    \tilde G_\zeta:=\frac{1}{2}\sum\limits_{j=1}^\infty (W_{j,\zeta}^*W_{j,\zeta}-\frac{1}{2}W_{j,-\zeta}^*W_{j,\zeta}-\frac{1}{2}W_{j,\zeta}^*W_{j,-\zeta}),\label{def: tG}
\end{equation}
where $W_{j,\zeta}^*\equiv(W_{j,\zeta})^*=(W_j^*)_{\bar\zeta}$. Moreover, $c'(\nu)\in(-\infty,\infty)$.
\end{prop}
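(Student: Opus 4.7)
The plan is to reduce the operator-norm bound to a scalar Gronwall inequality, exploiting the positivity preservation from Proposition~\ref{prop 4.1}(d). For $\rho\in\mathcal S_1^+$ and $\zeta=i\eta$ with $|\eta|=\nu$, the operator $\beta_{t,\zeta}\rho$ is positive, so
\begin{equation*}
\|\beta_{t,\zeta}\rho\|_1=\Tr(\beta_{t,\zeta}\rho)=:f(t),
\end{equation*}
and it suffices to bound $f(t)$. Decomposing a general $\lambda\in\mathcal S_1$ into four positive trace-class pieces $\rho_k$ of norm $\le\|\lambda\|_1$ (self-adjoint/anti-self-adjoint split, then Jordan decomposition) will supply the prefactor $4$ in~\eqref{beta acbd}.

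The heart of the argument is the identity, to be proved first on $\lambda\in\hat R(\mathcal S_1)$,
\begin{equation*}
\Tr(L_\zeta\lambda)=2\Tr(M_\zeta\lambda),\qquad L_\zeta:=L_{\zeta,-\zeta},\qquad M_\zeta:=\im H_\zeta+\tilde G_\zeta.
\end{equation*}
Two observations drive this. First, for $\zeta=i\eta$, the multiplication operator $T_\zeta=e^{\eta\cdot x}$ is formally real, giving $H_\zeta^*=H_{-\zeta}$ and $\bar\zeta=-\zeta$; cyclicity of the trace applied to the Hamiltonian part then gives $-i\Tr(H_\zeta\lambda-\lambda H_{-\zeta})=-i\Tr((H_\zeta-H_\zeta^*)\lambda)=2\Tr((\im H_\zeta)\lambda)$. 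Second, applying the cyclic identity $\Tr(W_{j,\zeta}\lambda W_{j,\zeta}^*)=\Tr(W_{j,\zeta}^*W_{j,\zeta}\lambda)$ to the dissipative part~\eqref{eq.Grho} with $\tilde\zeta=-\zeta$ collapses it to $2\Tr(\tilde G_\zeta\lambda)$ with $\tilde G_\zeta$ exactly as in~\eqref{def: tG}. Conditions~(AH) and~(AW) make $M_\zeta$ a bounded operator, and direct inspection of~\eqref{def: tG} shows $(\tilde G_\zeta)^*=\tilde G_\zeta$, so $M_\zeta$ is self-adjoint.

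With the identity in hand, $M_\zeta\le(\sup M_\zeta)\,\mathbbm 1$ as self-adjoint operators, together with positivity $\beta_{t,\zeta}\rho\ge 0$, yields
\begin{equation*}
f'(t)=2\Tr(M_\zeta\beta_{t,\zeta}\rho)\le 2(\sup M_\zeta)f(t),
\end{equation*}
so Gronwall gives $f(t)\le e^{2(\sup M_\zeta)t}\|\rho\|_1\le e^{2\nu c'(\nu)t}\|\rho\|_1$ by the definition~\eqref{def: velocity bd} of $c'(\nu)$. Combined with the four-part decomposition above this gives~\eqref{beta acbd}. Finally, $c'(\nu)<\infty$ follows from uniform boundedness of $\im H_{i\eta}$ (Condition~\eqref{con: AH}) and of $\tilde G_{i\eta}$ (using~(W) plus a Cauchy--Schwarz estimate on the cross-terms $\sum_j W_{j,-\zeta}^*W_{j,\zeta}$ from~(AW)), together with continuity in $\eta$ inherited from analyticity and compactness of the sphere $\{|\eta|=\nu\}\subset\mathbb R^n$; and $c'(\nu)>-\infty$ is automatic since each $\sup M_{i\eta}\ge -\|M_{i\eta}\|>-\infty$.

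The main technical obstacle is rigorously justifying both the cyclicity computations leading to $\Tr(L_\zeta\lambda)=2\Tr(M_\zeta\lambda)$ and the subsequent differentiation $f'(t)=\Tr(L_\zeta\beta_{t,\zeta}\rho)$, since $L_\zeta$ is unbounded. Proposition~\ref{prop 4.1}(a),(c) is tailored to handle this: $L_\zeta$ maps $\hat R(\mathcal S_1)$ boundedly into $\mathcal S_1$ and generates the strongly continuous semigroup $\beta_{t,\zeta}$, so all the traces above are finite for $\lambda$ in the $\beta_{t,\zeta}$-invariant dense subspace $\hat R(\mathcal S_1)$, and a standard density argument extends the Gronwall estimate to all of $\mathcal S_1^+$.
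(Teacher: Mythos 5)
Your proposal is correct and follows essentially the same route as the paper: reduction to positive trace-class operators via the Jordan-type four-part decomposition (giving the factor $4$), positivity preservation from Proposition~\ref{prop 4.1}(d) to identify $\|\beta_{t,\zeta}\rho\|_1$ with $\Tr(\beta_{t,\zeta}\rho)$, the trace-cyclicity identity collapsing $\Tr(L_{\zeta,-\zeta}\lambda)$ to $2\Tr((\im H_\zeta+\tilde G_\zeta)\lambda)$ (the paper's Lemma~\ref{lem: tGzeta} plus $H_{-\zeta}=H_\zeta^*$, written there as an HS inner product against $\sqrt{\lambda_{t,\zeta}}$), and a Gronwall argument on a dense domain followed by density. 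The only cosmetic difference is that you phrase the final inequality as $M_\zeta\le(\sup M_\zeta)\mathbbm 1$ paired against a positive operator, which is exactly the paper's step $\langle\sqrt{\lambda_{t,\zeta}},M_\zeta\sqrt{\lambda_{t,\zeta}}\rangle_{HS}\le(\sup M_\zeta)\Tr\lambda_{t,\zeta}$.
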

Observe that 
\begin{equation}\label{eq: tGzeta sys}
\tilde G_\zeta\geq \frac{1}{4}\sum\limits_{j} \left(W_{j,\zeta}^*W_{j,\zeta}-W_{j,-\zeta}^*W_{j,-\zeta}\right).
\end{equation}

\begin{proof}[Proof of Proposition~\ref{prop: beta acbd}] 
    Since the operator-functions $(W_j^*)_{\bar\zeta}$ and $(W_{j,\zeta})^*$ are analytic in $\bar \zeta$ and equal for $\zeta\in \mathbb R^n$, they are equal for all $\zeta\in S_a^n$. This yields
    \begin{equation}
        W_{j,\zeta}^*=(W_{j,\bar \zeta})^*.\label{4.6}
    \end{equation}    
    Hence, by~\eqref{def: tG}, $\tilde G_\zeta$ is formally symmetric. Moreover, by Condition (AW),~\eqref{def: tG}, \\~\eqref{4.6} and inequality 
    \begin{align}
        \|\sum\limits_{j=1}^\infty A_j^* B_j\|\leq &\|\sum\limits_{j=1}^\infty A_j^*A_j\|^{\frac{1}{2}}\|\sum\limits_{j=1}^\infty B_j^*B_j\|^{\frac{1}{2}}\nonumber\\
        \leq& \frac{1}{2}\left( \|\sum\limits_{j=1}^\infty A_j^*A_j\|+\|\sum\limits_{j=1}^\infty B_j^*B_j\|\right),\label{est: 2.38}
      \end{align}
    which follows by applying the Cauchy-Schwarz inequality to $|\langle \psi, \sum\limits_{j=1}^\infty A_j^*B_j\psi\rangle|$, the operator $\tilde G_{\zeta}$ is bounded and therefore self-adjoint. This and Condition (AH) imply that $c'=c'(\nu)<\infty$. Next, we need the following lemma.
    \begin{lemma}\label{lem: tGzeta} For $\zeta\in S_a^n, \re \zeta=0,$ and all $\rho\in \mathcal S_1^+$, we have, for $G_{\zeta, \theta}$ defined in~\eqref{eq.Grho},
    \begin{equation}
        \Tr(G_{\zeta,- \zeta}\rho)= \Tr(\tilde G_{\zeta} \rho).\label{est: reTrG}
    \end{equation}
    \end{lemma}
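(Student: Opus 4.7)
The statement will follow by a direct computation from the definitions, with no further input needed beyond Proposition~\ref{prop 4.1}(a) and cyclicity of trace.

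First, I would specialize the defining formula~\eqref{eq.Grho} by setting $\tilde\zeta=-\zeta$. The hypothesis $\re\zeta = 0$ gives $\bar\zeta = -\zeta$ and $\overline{-\zeta} = \zeta$, so the conjugated indices appearing under the adjoints $W_{j,\cdot}^*$ in~\eqref{eq.Grho} collapse onto the two-point set $\{\zeta,-\zeta\}$. Using the identification $W_{j,\zeta}^* = (W_j^*)_{\bar\zeta}$ from~\eqref{4.6}, this rewrites $G_{\zeta,-\zeta}\rho$ as
\begin{equation*}
G_{\zeta,-\zeta}\rho = \sum_{j=1}^{\infty}\Bigl(W_{j,\zeta}\rho W_{j,\zeta}^* - \tfrac{1}{2}W_{j,-\zeta}^* W_{j,\zeta}\rho - \tfrac{1}{2}\rho W_{j,\zeta}^* W_{j,-\zeta}\Bigr).
\end{equation*}

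Next, I would take the trace of this expression. Proposition~\ref{prop 4.1}(a) guarantees that this series converges in $\mathcal S_1$, so the trace may be applied term-by-term. Applying cyclicity to each summand (valid since $\rho\in\mathcal S_1$ and all $W_{j,\pm\zeta}$, $W_{j,\pm\zeta}^*$ are bounded by Condition~(AW)):
\begin{equation*}
\Tr\bigl(W_{j,\zeta}\rho W_{j,\zeta}^*\bigr) = \Tr\bigl(W_{j,\zeta}^* W_{j,\zeta}\rho\bigr),\qquad \Tr\bigl(\rho W_{j,\zeta}^* W_{j,-\zeta}\bigr) = \Tr\bigl(W_{j,\zeta}^* W_{j,-\zeta}\rho\bigr).
\end{equation*}
Summing over $j$ and factoring out $\rho$ produces the trace of a self-adjoint operator against $\rho$ whose algebraic form matches the definition~\eqref{def: tG} of $\tilde G_\zeta$, which gives~\eqref{est: reTrG}.

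The one thing to watch for is careful bookkeeping of the conjugated indices: the hypothesis $\re\zeta = 0$ is used precisely in the collapses $\bar\zeta = -\zeta$ and $\overline{-\zeta}=\zeta$, without which the arguments inside the $W$'s would a priori range over a four-point set $\{\zeta, -\zeta, \bar\zeta, -\bar\zeta\}$ rather than the two-point set $\{\zeta,-\zeta\}$ appearing in $\tilde G_\zeta$. I do not foresee a serious obstacle: the argument is essentially algebraic once this collapse is exploited, and the analytic justifications needed (norm-convergence of the $W$-series and trace cyclicity on $\mathcal S_1$) are already in place from Proposition~\ref{prop 4.1} and Condition~(AW).
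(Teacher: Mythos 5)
Your proof follows the paper's own argument essentially verbatim: specialize~\eqref{eq.Grho} at $\tilde\zeta=-\zeta$, use $\re\zeta=0$ to collapse $\bar\zeta=-\zeta$ and $\bar{\tilde\zeta}=\zeta$, take the trace term by term, apply cyclicity, and compare with~\eqref{def: tG}. One bookkeeping caveat: after cyclicity the trace equals $\sum_j\Tr\bigl[\bigl(W_{j,\zeta}^*W_{j,\zeta}-\tfrac12 W_{j,-\zeta}^*W_{j,\zeta}-\tfrac12 W_{j,\zeta}^*W_{j,-\zeta}\bigr)\rho\bigr]$, which with the prefactor $\tfrac12$ printed in~\eqref{def: tG} is $2\Tr(\tilde G_\zeta\rho)$ rather than $\Tr(\tilde G_\zeta\rho)$, so your assertion that the algebraic forms ``match'' glosses over a factor of $2$. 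This discrepancy is present in the paper itself (the lemma is stated as an equality, while the subsequent computation invokes the relation $G_{\zeta,-\zeta}=2\tilde G_\zeta$), so your argument reproduces the paper's proof, factor-of-two ambiguity included; tracking that constant explicitly is the only refinement needed.
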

    \begin{proof} By~\eqref{eq.Grho}, 
     with $ \theta=-\zeta$ and $\re\zeta=0$, we have $\bar{ \theta}=\zeta$, $\bar\zeta=-\zeta$ and
    \begin{equation}
        \begin{aligned}
            & \Tr(G_{\zeta,- \zeta}\rho)\\
            =&  \Tr\sum\limits_{j=1}^\infty \left( W_{j,\zeta} \rho W_{j,\zeta}^*-\frac{1}{2} W_{j,-\zeta}^*W_{j,\zeta} \rho-\frac{1}{2}  \rho W_{j,\zeta}^*W_{j,-\zeta}\right).
        \end{aligned}\label{reTrGrho}
    \end{equation}
    This relation, together with the definition of $\tilde G_\zeta$ (see Eq.~\eqref{def: tG}) and the cyclicity of the trace, implies~\eqref{est: reTrG}. 
    \end{proof}
    Fix $\zeta=i\eta, \eta\in \mathbb R^n$ with $|\eta|=\nu, \nu \in (0,a)$. We write any $\lambda\in \mathcal S_1$ as 
    \begin{equation}
        \lambda=\lambda_+-\lambda_-+i(\lambda_+'-\lambda_-'), \quad \text{with }\lambda_\pm, \lambda_\pm'\in \mathcal S_1^+.\label{decom: lambda}
    \end{equation}
    Specifically, if $|\lambda|=\sqrt{\lambda^*\lambda}$, $\re \lambda=\frac{1}{2}(\lambda+\lambda^*)$ and $\im \lambda=\frac{1}{2i}(\lambda-\lambda^*)$, then $\lambda_\pm$ and $\lambda_\pm'$ are given by 
\begin{equation}
    \lambda_\pm:=\frac{|\Re \lambda |\pm\Re\lambda}{2}\quad \text{and}\quad \lambda_\pm':=\frac{|\im \lambda| \pm\im\lambda}{2}.
\end{equation}
Recall that $ \beta_{t,\zeta}(\lambda)\equiv\beta_{t,\zeta,-\zeta}(\lambda).$ By the linearity, it suffices to estimate $\beta_{t,\zeta}(\lambda)$ for $\lambda\in \mathcal S_1^+\cap \mathcal D(L_{\zeta,-\zeta})$. We note that by 
Proposition~\ref{prop 4.1}(d), $\lambda \in \mathcal S_1^+$ implies $ \beta_{t,\zeta}(\lambda) \in \mathcal S_1^+$. Hence, 
\begin{equation}
     \| \beta_{t,\zeta}(\lambda)\|_{1}=\Tr(\beta_{t,\zeta}(\lambda)),
\end{equation}
which yields, by Lemma~\ref{lem: tGzeta} and the relation $G_{\zeta,-\zeta}=2\tilde G_\zeta$, where $\tilde G_\zeta$ is defined in ~\eqref{def: tG}, and with $\lambda_{t,\zeta}:=\beta_{t,\zeta}(\lambda)$ and $\langle \lambda, \mu\rangle_{HS}:=\Tr(\lambda^* \mu)$, 
\begin{align}
    \p_t\| \lambda_{t,\zeta}\|_{1}=&\Tr(\p_t\lambda_{t,\zeta})\nonumber\\
    =&\Tr\left[(-i) \left(H_\zeta \lambda_{t,\zeta} -  \lambda_{t,\zeta} H_{-\zeta}\right)+G_{\zeta,-\zeta}\lambda_{t,\zeta}\right]\nonumber\\
    =&\Tr\left[(-i)\left( H_\zeta \lambda_{t,\zeta} - \lambda_{t,\zeta} H_{-\zeta}\right)+2\tilde G_{\zeta}\lambda_{t,\zeta}\right]\nonumber\\
    =& \langle \sqrt{\lambda_{t,\zeta}}, \left(\frac{1}{i}(H_\zeta-H_{-\zeta})+ 2\tilde G_\zeta\right)\sqrt{\lambda_{t,\zeta}}\rangle_{HS},\label{ptTr}
\end{align}
for every $\lambda\in \mathcal S_1^+ \cap \mathcal D(L_{\zeta,-\zeta})$. Since $H_{-\zeta}=H_\zeta^*$, for $\zeta=i\eta, \eta\in \mathbb R^n$ with $|\eta|=\nu, \nu \in (0,a)$, this yields 
\begin{equation}
\begin{aligned}
\partial_t\|\lambda_{t,\zeta}\|_1=&2\langle \sqrt{\lambda_{t,\zeta}}, (\im H_\zeta+\tilde G_\zeta)\sqrt{\lambda_{t,\zeta}} \rangle_{HS}\\
    \leq & 2\nu c'(\nu) \|\lambda_{t,\zeta}\|_{1}.
\end{aligned}
\end{equation}
Solving this inequality, with $\|\lambda_{t,\zeta}\|_1\vert_{t=0}=\|\lambda\|_1$, yields that
\begin{equation}\label{eq: betatzeta}
     \|\beta_{t,\zeta}(\lambda) \|_{1}\leq  e^{2\nu c't}\|\lambda\|_{1},
\end{equation}
$\forall \lambda\in \mathcal S_1^+\cap\mathcal D(L_{\zeta,-\zeta}),$ and consequently, by the B.L.T Theorem (\cite{RS1}, pp 9), we arrive at~\eqref{eq: betatzeta} for every $\lambda\in \mathcal S_1^+$. Now, using~\eqref{decom: lambda}, 
\begin{equation}
    \|\beta_{t,\zeta}(\lambda)\|_1\leq \sum\limits_{j=+,-}\left(\|\beta_{t,\zeta}(\lambda_j)\|_1+ \|\beta_{t,\zeta}(\lambda_j')\|_1\right)
\end{equation}
and~\eqref{eq: betatzeta} yields ~\eqref{beta acbd}.\end{proof}

Next, we estimate the first two factors on the r.h.s. of~\eqref{eq: 7.8}. Let $S^{n-1}$ denote the unit sphere in $\mathbb R^n$. Our starting point is relation~\eqref{eq: XbtY0} with $\zeta=- \theta=i\nu b $, where $b\in S^{n-1}$ to be chosen later on.
\begin{lemma} Let $U$ and $V$ be two bounded sets in $\Lambda$. For all $\zeta=i\nu b, \nu\in (0,a)$ and $b\in S^{n-1}$, 
\begin{equation}
    \|\chi_U T_{-\zeta}\|\|\chi_V T_{\zeta}\|\leq e^{-\nu \delta_{UV}},\label{eq: 7.9}
\end{equation}
where the constant $\delta_{UV}$ is given by
\begin{equation}
    \delta_{UV}:=r_U-\tilde r_V.\label{eq: 2.100}
\end{equation}
Here $r_U:=\inf\limits_{x\in U} b\cdot x$ and $\tilde r_V:=\sup\limits_{y\in V} b\cdot y$.
\end{lemma}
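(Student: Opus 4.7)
The plan is to unwind the definitions: since $T_\xi$ is the unitary multiplication operator by $e^{-i\xi\cdot x}$ on $L^2(\Lambda)$, its analytic continuation to $\zeta=i\nu b\in S_a^n$ (with $b\in S^{n-1}$ and $\nu\in(0,a)$) is, formally, multiplication by $e^{-i\zeta\cdot x}=e^{\nu b\cdot x}$. Similarly, $T_{-\zeta}$ is multiplication by $e^{-\nu b\cdot x}$. Thus $\chi_U T_{-\zeta}$ acts as multiplication by the bounded function $\chi_U(x)e^{-\nu b\cdot x}$, and $\chi_V T_{\zeta}$ as multiplication by $\chi_V(x)e^{\nu b\cdot x}$. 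Both multipliers are bounded precisely because $U,V$ are bounded, which is the hypothesis of the lemma, so these are genuine operators on $L^2(\Lambda)$.

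Next, I would invoke the standard fact that the $\mathcal B(L^2)$ operator norm of multiplication by a bounded function $f$ equals $\|f\|_{L^\infty}$. This gives
\begin{equation}
\|\chi_U T_{-\zeta}\|=\sup_{x\in U}e^{-\nu b\cdot x}=e^{-\nu\,\inf_{x\in U}b\cdot x}=e^{-\nu r_U},
\end{equation}
and, analogously,
\begin{equation}
\|\chi_V T_{\zeta}\|=\sup_{y\in V}e^{\nu b\cdot y}=e^{\nu\,\sup_{y\in V}b\cdot y}=e^{\nu\tilde r_V}.
\end{equation}
Multiplying these two identities yields
\begin{equation}
\|\chi_U T_{-\zeta}\|\,\|\chi_V T_{\zeta}\|=e^{-\nu(r_U-\tilde r_V)}=e^{-\nu\delta_{UV}},
\end{equation}
which is the desired estimate \eqref{eq: 7.9} (in fact, with equality rather than inequality).

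There is no genuine obstacle here: the only small subtlety is to justify that the analytic continuation of $T_\xi$ from $\xi\in\mathbb R^n$ to $\zeta=i\nu b$, when restricted by $\chi_U$ (or $\chi_V$) with $U,V$ bounded, produces the concrete bounded multiplication operator by $\chi_U(x)e^{-\nu b\cdot x}$; this follows by uniqueness of analytic continuation into $\mathcal B(\mathcal H)$ together with the fact that the formal formula $\chi_U(x)e^{-i\zeta\cdot x}$ is holomorphic in $\zeta$ as an element of $L^\infty$ (hence of $\mathcal B(L^2)$) thanks to boundedness of $U$. Once this identification is made, the proof reduces to the one-line $L^\infty$-norm computation above.
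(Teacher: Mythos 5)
Your proposal is correct and follows essentially the same route as the paper: both identify $\chi_U T_{-\zeta}$ and $\chi_V T_{\zeta}$ (for $\zeta=i\nu b$) with multiplication by $\chi_U(x)e^{-\nu b\cdot x}$ and $\chi_V(x)e^{\nu b\cdot x}$ and bound the operator norms by the supremum of the multipliers, giving $e^{-\nu r_U}$ and $e^{\nu\tilde r_V}$ respectively. The only cosmetic difference is that you claim equality, whereas on $L^2(\mathbb R^n)$ the norm is the essential supremum (which could be strictly smaller for sets $U,V$ with parts of measure zero), so only the inequality $\leq$ is guaranteed --- which is exactly what the lemma asserts and what the paper states.
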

\begin{proof} By the definitions of $\chi_U$ and $T_\zeta$, we have
\begin{equation}
\|\chi_U T_{-\zeta}\|\leq \sup\limits_{x\in \Lambda} |\chi_U(x)e^{-i(-\zeta)\cdot x}|=\sup\limits_{x\in \Lambda} \left(\chi_U(x)e^{-\nu b\cdot x}\right).\label{6.11}
\end{equation}
Now, by the definition of $r_U$, ~\eqref{6.11} yields 
\begin{equation}
    \|\chi_U T_{-\zeta}\|\leq e^{-\nu r_U}.\label{52}
\end{equation}
Similarly, by the definition of $\tilde r_V$ in place of $r_U$, we obtain
\begin{equation}
    \|\chi_V T_\zeta\|\leq \sup\limits_{x\in V}e^{\nu b\cdot x}\leq e^{\nu \tilde r_V}.\label{53}
\end{equation}
Estimates~\eqref{52} and~\eqref{53} yield~\eqref{eq: 7.9}.\end{proof}

Estimates~\eqref{eq: 7.8} and~\eqref{eq: 7.9} and Proposition~\ref{prop: beta acbd} imply 
\begin{equation}
    \|\hat \chi_U \beta_t \hat \chi_V\|_1^{op}\leq 4e^{-2\nu\delta_{UV}+2\nu c't}.\label{est: S1opUbY}
\end{equation}
\begin{prop}Let $U\subset B_r(x_0)$ and $V\subset B_r(y_0)$ for some $x_0\in X$ and $y_0\in Y$, with $r=\frac{\epsilon}{2}d_{XY}, \epsilon\in (0,1)$. Then we have 
    \begin{equation}
        \|\hat \chi_U \beta_t \hat \chi_V\|_1^{op}\leq 4e^{-2\nu((1-\epsilon/2)d_{UV}-\epsilon d_{XY}-c't)}.\label{UYest}
    \end{equation}
\end{prop}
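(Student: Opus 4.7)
The plan is to apply the single-direction estimate \eqref{est: S1opUbY} with a well-chosen unit vector $b\in S^{n-1}$, and to exploit the ball containments $U\subset B_r(x_0)$ and $V\subset B_r(y_0)$ to control the separation quantity $\delta_{UV}=\inf_{x\in U}b\cdot x-\sup_{y\in V}b\cdot y$ appearing in that estimate.

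First I would set $b=(x_0-y_0)/|x_0-y_0|$, which is well defined since $d_{XY}>0$ forces $x_0\ne y_0$. Splitting $b\cdot x=b\cdot x_0+b\cdot(x-x_0)$ and using $|x-x_0|\le r$ for $x\in U$ (and symmetrically $|y-y_0|\le r$ for $y\in V$), I obtain immediately
\begin{equation*}
\delta_{UV}\ge b\cdot(x_0-y_0)-2r=|x_0-y_0|-2r.
\end{equation*}
Now $|x_0-y_0|$ admits two natural lower bounds: (i) $|x_0-y_0|\ge d_{XY}$, directly from $x_0\in X$ and $y_0\in Y$; and (ii) $|x_0-y_0|\ge d_{UV}-2r$, obtained by applying the triangle inequality to a near-minimizing pair $x^*\in U,\,y^*\in V$ of $d_{UV}$ together with $|x^*-x_0|,|y^*-y_0|\le r$. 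Any convex combination of (i) and (ii) is again a valid lower bound; taking weights $\epsilon/2$ on (i) and $1-\epsilon/2$ on (ii), using $2r=\epsilon d_{XY}$, and then subtracting the remaining $2r$, produces an inequality of the shape $\delta_{UV}\ge(1-\epsilon/2)d_{UV}-\epsilon d_{XY}$. Substituting into \eqref{est: S1opUbY} with $\nu\in(0,a)$ then delivers \eqref{UYest}.

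The main obstacle is calibrating the constants so that the $d_{XY}$-coefficient in the exponent is \emph{exactly} $-\epsilon$: a single convex-combination bookkeeping as above produces a coefficient of order $-(3-\epsilon)\epsilon/2$, which is worse than the claimed $-\epsilon$ when $\epsilon<1$. To absorb this excess slack, one likely has to split into two regimes---use $b=(x^*-y^*)/d_{UV}$ in the regime where $d_{UV}$ dominates $d_{XY}$ (there the direct estimate $\delta_{UV}\ge d_{UV}-4r$ already beats the target) and $b=(x_0-y_0)/|x_0-y_0|$ in the complementary regime---and then take the minimum of the two resulting exponential bounds. Verifying that in each regime the residual is bounded by the targeted $-\epsilon d_{XY}$ correction is the technical heart; once that dichotomy is in hand, substitution into \eqref{est: S1opUbY} is routine.
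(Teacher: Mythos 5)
Your first two steps reproduce the paper's own proof: translate so that $y_0$ sits at the origin, take $b=(x_0-y_0)/|x_0-y_0|$, and deduce $\delta_{UV}\ge |x_0-y_0|-2r=|x_0-y_0|-\epsilon d_{XY}$ from the definition \eqref{eq: 2.100}. Your diagnosis of the remaining difficulty is also accurate: to land on the stated exponent one needs $|x_0-y_0|\ge(1-\epsilon/2)d_{UV}$, and the two crude lower bounds $|x_0-y_0|\ge d_{XY}$ and $|x_0-y_0|\ge d_{UV}-\epsilon d_{XY}$ deliver this only when $d_{UV}\le\frac{2}{2-\epsilon}d_{XY}$ or $d_{UV}\ge 2d_{XY}$, respectively; your convex-combination coefficient $(3-\epsilon)\epsilon/2$ is exactly what comes out in between (the paper's chain \eqref{deltaUYbd} asserts the last inequality without addressing this window). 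However, your proposed repair does not close the gap: with $b$ pointing along a near-minimizing pair for $d_{UV}$ you only get $\delta_{UV}\ge d_{UV}-4r=d_{UV}-2\epsilon d_{XY}$, and this beats the target $(1-\epsilon/2)d_{UV}-\epsilon d_{XY}$ precisely when $d_{UV}\ge 2d_{XY}$ again, so the window $\frac{2}{2-\epsilon}d_{XY}<d_{UV}<2d_{XY}$ survives your dichotomy. As written, the proof is therefore incomplete.

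Two ways to finish. First, the exact coefficient of $d_{XY}$ in \eqref{UYest} is immaterial downstream: Proposition~\ref{prop: 2.9} only needs a slack of the form $C\epsilon\, d_{XY}$ in the exponent (the final $\mu$ is $(1-\tfrac{5}{2}\epsilon)\nu$ with $\epsilon$ a free small parameter), so your bound with coefficient $(3-\epsilon)\epsilon/2\le\tfrac{3}{2}\epsilon$ would serve after renaming constants. Second, if you want the statement as printed, sharpen the geometry instead of averaging the two crude bounds: with $b=(x_0-y_0)/|x_0-y_0|$, for $x\in U$ and $y\in V$ the component of $x-y$ orthogonal to $b$ has norm at most $2r$ (since $x_0-y_0$ is parallel to $b$), so $b\cdot(x-y)\ge\sqrt{|x-y|^2-4r^2}\ge |x-y|-\frac{4r^2}{|x-y|}$, whence $\delta_{UV}\ge d_{UV}-\frac{4r^2}{d_{UV}}\ge d_{UV}-\frac{\epsilon^2}{1-\epsilon}\,d_{XY}$ using $d_{UV}\ge(1-\epsilon)d_{XY}$. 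For $\epsilon\le\frac{1}{2}$ (which covers the range $\epsilon\in(0,\tfrac{2}{5})$ actually used) this gives $\delta_{UV}\ge d_{UV}-\epsilon d_{XY}\ge(1-\epsilon/2)d_{UV}-\epsilon d_{XY}$, and substitution into \eqref{est: S1opUbY} yields \eqref{UYest}.
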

\begin{proof}

We translate both balls by the vector $y_0$ in order to place $y_0$ at the origin. Then we take $b=(x_0-y_0)/|x_0-y_0|$ and this gives 
\begin{equation}
    \begin{aligned}
        \delta_{UV}\geq &\inf\limits_{x\in B_{r}(x_0-y_0)} b\cdot x-\sup\limits_{y\in B_r(0)} b\cdot y \\
        \geq & |x_0-y_0| -\epsilon d_{XY}\\
        \geq &(1-\frac{\epsilon}{2}) d_{UV}-\epsilon d_{XY}.
    \end{aligned}\label{deltaUYbd}
\end{equation}
Estimates~\eqref{est: S1opUbY} and~\eqref{deltaUYbd} yield~\eqref{UYest}.
\end{proof}

For general sets $X$ and $Y$, we appeal to the following proposition proven in Subsection~\ref{sec: pf prop2.9}:
\begin{prop}\label{prop: 2.9}
Let $X$ and $Y$ be two arbitrary subsets of $\Lambda$, and assume~\eqref{UYest} holds for $U\subset B_r(x_0)$ and $V\subset B_r(y_0)$, for $r=\frac{\epsilon}{2}d_{XY}$, $\epsilon\in (0, \frac{2}{5})$, and for any $x_0\in X$ and $y_0\in Y$. Then~\eqref{eq: 4.7} holds, with $\mu=(1-\frac{5}{2}\epsilon)\nu$,
\begin{equation}
    c(\mu):=c'(\mu/(1-\frac{5}{2}\epsilon)), \label{def: cmu}
\end{equation}
where $c'(\nu)$ is defined in~\eqref{def: velocity bd}, and $c=c'(\mu/(1-\frac{5}{2}\epsilon))/(1-\frac{5}{2}\epsilon)=c(\mu)/(1-\frac{5}{2}\epsilon)$.
\end{prop}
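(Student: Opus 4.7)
The plan is a partition-and-sum argument that reduces~\eqref{eq: 4.7} to the ball-to-ball estimate~\eqref{UYest}. First I would tile $\Lambda$ by disjoint cubes of side $\lesssim r/\sqrt{n}$, where $r=\tfrac{\epsilon}{2}d_{XY}$, set $U_i:=X\cap Q_i$ and $V_j:=Y\cap Q_j$ for those cubes meeting $X$, resp.\ $Y$, and pick centres $x_i\in U_i\subset X$, $y_j\in V_j\subset Y$; then $U_i\subset B_r(x_i)$, $V_j\subset B_r(y_j)$, the centres sit in the original sets as required by the hypothesis, and $X=\bigsqcup_i U_i$, $Y=\bigsqcup_j V_j$. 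The decompositions $\chi_X=\sum_i\chi_{U_i}$ and $\chi_Y=\sum_j\chi_{V_j}$ together with the triangle inequality in $\mathcal S_1$ yield
\begin{equation*}
\|\hat\chi_X\beta_t\hat\chi_Y(\lambda)\|_1 \leq \sum_{i,k,j,l}\|\chi_{U_i}\beta_t(\chi_{V_j}\lambda\chi_{V_l})\chi_{U_k}\|_1.
\end{equation*}

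The second step is to estimate each block. For the diagonal $i=k$, $j=l$ the hypothesis~\eqref{UYest} applies verbatim. For off-diagonal blocks I would rewrite
\begin{equation*}
\chi_{U_i}\beta_t(\chi_{V_j}\lambda\chi_{V_l})\chi_{U_k}=\chi_{U_i}\,\bigl(\hat\chi_{U_i\cup U_k}\beta_t\hat\chi_{V_j\cup V_l}(\lambda)\bigr)\,\chi_{U_k},
\end{equation*}
and re-run the analytic-deformation argument of Section~\ref{sec: 2} on the pair-union $U=U_i\cup U_k$, $V=V_j\cup V_l$, with the Combes--Thomas direction $b\in S^{n-1}$ chosen to separate $\{x_i,x_k\}$ from $\{y_j,y_l\}$ as well as possible. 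The exponential weights satisfy $\|\chi_{U_i\cup U_k}T_{-\zeta}\|\leq e^{-\nu(\min(b\cdot x_i,b\cdot x_k)-r)}$ and symmetrically for $V$, producing a bound of the form $4\,e^{-2\nu((1-\epsilon/2)d_{ikjl}-\epsilon d_{XY}-c't)}\|\lambda\|_1$, where $d_{ikjl}\geq(1-\epsilon)d_{XY}$ after the optimal choice of $b$.

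The third step is to sum. Since the centres $\{x_i\},\{y_j\}$ have density $\lesssim r^{-n}$ in $\Lambda$, the series $\sum_{i,j,k,l} e^{-2(1-\epsilon/2)\nu\, d(x_i,y_j)}$ converges, and the resulting combinatorial constant is absorbed into $C=C_{n,c,\mu}$. Collecting the exponents, the $d_{XY}$-coefficient becomes $-2\nu(1-\epsilon/2)(1-\epsilon)+2\nu\epsilon=-2\nu(1-5\epsilon/2)+\mathcal O(\epsilon^2\nu)$, giving $\mu=(1-5\epsilon/2)\nu$, and the $t$-coefficient is $2\nu c'=2\mu\cdot c'(\nu)/(1-5\epsilon/2)$, giving $c=c'(\nu)/(1-5\epsilon/2)=c(\mu)/(1-5\epsilon/2)$, as stated. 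The main obstacle is precisely the off-diagonal blocks, which are not covered by~\eqref{UYest} as stated since the union of two $r$-balls is not contained in a single $r$-ball; this forces the mild extension of the Combes--Thomas step above, and the requirement that this extension still produces net decay is what pins down the admissible range $\epsilon\in(0,\tfrac{2}{5})$ through the factor $1-5\epsilon/2>0$.
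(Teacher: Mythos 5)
There is a genuine gap in your treatment of the off-diagonal blocks, and it is fatal for arbitrary (in particular unbounded) $X$ and $Y$. A single Combes--Thomas direction $b$ cannot separate the pair-union $U_i\cup U_k$ from $V_j\cup V_l$ when the two components on each side are far apart transversally: take $X$ and $Y$ to be parallel hyperplanes at distance $d_{XY}$, with $x_i,x_k$ (resp.\ $y_j,y_l$) separated by a large $R$ along them; then $\min(b\cdot x_i,b\cdot x_k)-\max(b\cdot y_j,b\cdot y_l)\le b\cdot\hat n\, d_{XY}-R|b\cdot \hat e|$ for the normal $\hat n$ and tangential $\hat e$ directions, so the optimal $b$ yields only $d_{ikjl}\approx d_{XY}$, uniformly in $R$, with no gain from the block separation. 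Your quadruple sum then contains, for each fixed $(i,j)$, infinitely many (or, for bounded sets, $N_1N_2$-many) pairs $(k,l)$ each contributing the same $e^{-2\nu(1-\epsilon)d_{XY}}$; indeed the series you propose to sum, $\sum_{i,j,k,l}e^{-2(1-\epsilon/2)\nu d(x_i,y_j)}$, has a summand independent of $(k,l)$ and diverges, or at best yields a constant depending on $X$ and $Y$, whereas \eqref{eq: 4.7} requires $C=C_{n,c,\mu}$.

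The paper circumvents this entirely by exploiting positivity rather than redoing the deformation. One first reduces to $\rho\in\mathcal S_1^+$ via \eqref{decom: lambda}; then $\|\hat\chi_X\beta_t\hat\chi_Y\rho\|_1=\Tr(\hat\chi_X\beta_t(\hat\chi_Y\rho))$, and cyclicity of the trace collapses the $X$-side to a \emph{single} sum over $k$ (no $(i,k)$ off-diagonal blocks at all). The $Y$-side off-diagonal terms $\Tr(\hat\chi_{X_k}\beta_t(\chi_{Y_{j_1}}\rho\chi_{Y_{j_2}}))$ are then bounded by Lemma~\ref{lem: 6.3} (a Cauchy--Schwarz inequality for completely positive maps, proved via unitary dilation) by the geometric mean of the two \emph{diagonal} terms, so only the ball-to-ball estimate \eqref{UYest} is ever invoked and the double $j$-sum becomes a perfect square $\sum_k\bigl(\sum_j(\Tr(\hat\chi_{X_k}\beta_t(\hat\chi_{Y_j}\rho)))^{1/2}\bigr)^2$. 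Finally the sum over $j$ is controlled not by counting blocks but by the normalization $\sum_j\Tr(\hat\chi_{Y_j}\rho)=\Tr\rho=1$, which is what makes the constant independent of $X$ and $Y$. If you want to salvage your route you would need some replacement for this positivity input; the triangle inequality applied at the very first step already discards it.
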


\textnormal{Inequality}~\eqref{UYest} \textnormal{and} \textnormal{Proposition}~\ref{prop: 2.9} \textnormal{imply}~\eqref{eq: 4.7}, \textnormal{which completes the proof of} \textnormal{Theorem}~\ref{thm1}. \hfill $\qed$

\section{Inequalities for {\red normal and completely positive  
maps}}\label{sec: 3} 

We introduce generalizations of the completely positive and Lindblad maps $\psi'$  and $G'$ defined in {\red~\eqref{psi'} and} \eqref{G'} 
  Let $U=\{U_j \,:\, j=1,\cdots\}$ and $V=\{V_j \,:\, j=1,\cdots\}$ be collections of bounded operators on $\mathcal H$ s.t. $\sum\limits_{j=1}^\infty U_j^*U_j$ and $\sum\limits_{j=1}^\infty V_j^*V_j$ converge weakly. Define
the map \begin{equation}\label{psi'UV}   \psi'_{UV}(A):=\sum\limits_{j=1}^\infty V_j^*AU_j 
    \end{equation}
 for every $A\in \mathcal B$,      and its unital normalization $G_{UV}'$ satisfying $G_{UV}'(\one)=0$:
 \begin{equation}\label{G'UV}
G_{UV}'(A):=\psi_{UV}'(A)-\frac{1}{2}\{ \psi_{UV}'(\mathbbm1), A\}.
\end{equation}
{\red Clearly, $\psi'_{UV}$ and $G'_{UV}$  extend maps \eqref{psi'} and \eqref{G'}.} For want of a better term, we call the maps $\psi'_{UV}$ {\red normal} \st{sub-completely positive} maps.
\begin{lemma}\label{lem: GUV}The operators $\psi'_{UU}(\mathbbm1)$ and $\psi'_{VV}(\mathbbm1)$ are bounded and
    \begin{equation}
        \|G_{UV}'(A)\|\leq 3 \|A\| \|\psi_{UU}'(\mathbbm1)\|^{\frac{1}{2}}\|\psi_{VV}'(\mathbbm1)\|^{\frac{1}{2}}.\label{est: psiUVA}
    \end{equation}
\end{lemma}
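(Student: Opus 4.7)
The plan is to reduce both pieces of $G_{UV}'(A)$ in~\eqref{def: G'UV}, namely $\psi_{UV}'(A)$ and $\frac{1}{2}\{\psi_{UV}'(\mathbbm 1), A\}$, to direct applications of the Cauchy--Schwarz--type operator bound~\eqref{est: 2.38} already established in the paper. The boundedness of $\psi_{UU}'(\mathbbm 1) = \sum_j U_j^*U_j$ and $\psi_{VV}'(\mathbbm 1) = \sum_j V_j^*V_j$ is immediate: the partial sums are monotone positive, so the weak convergence hypothesis on $U$ and $V$ together with the uniform boundedness principle forces convergence to bounded positive limits.

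For the first piece, I would apply~\eqref{est: 2.38} with $A_j := V_j$ and $B_j := A U_j$, yielding
$$\|\psi_{UV}'(A)\| = \Bigl\|\sum_j V_j^*(A U_j)\Bigr\| \leq \Bigl\|\sum_j V_j^*V_j\Bigr\|^{1/2} \Bigl\|\sum_j U_j^*A^*A U_j\Bigr\|^{1/2}.$$
Since $A^*A \leq \|A\|^2 \mathbbm 1$ as operators, and conjugation by $U_j$ preserves the positive cone of $\mathcal B(\mathcal H)$, one has $U_j^*A^*AU_j \leq \|A\|^2 U_j^*U_j$; summing gives $\sum_j U_j^*A^*AU_j \leq \|A\|^2 \psi_{UU}'(\mathbbm 1)$, so $\|\psi_{UV}'(A)\| \leq \|A\|\,\|\psi_{UU}'(\mathbbm 1)\|^{1/2}\|\psi_{VV}'(\mathbbm 1)\|^{1/2}$. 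For the anticommutator piece, applying~\eqref{est: 2.38} with $A_j := V_j$ and $B_j := U_j$ gives the matching bound $\|\psi_{UV}'(\mathbbm 1)\| \leq \|\psi_{UU}'(\mathbbm 1)\|^{1/2}\|\psi_{VV}'(\mathbbm 1)\|^{1/2}$, and then $\frac{1}{2}\|\{\psi_{UV}'(\mathbbm 1), A\}\| \leq \|\psi_{UV}'(\mathbbm 1)\|\,\|A\|$ controls this contribution by the same product.

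Summing the two contributions produces constant $2$, comfortably inside the claimed $3$ in~\eqref{est: psiUVA}. There is no real obstacle: the argument is bookkeeping on top of~\eqref{est: 2.38}, and the only mildly nontrivial step is the operator-monotonicity observation $U_j^*A^*AU_j \leq \|A\|^2 U_j^*U_j$, which is immediate from positivity of conjugation.
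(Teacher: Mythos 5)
Your proposal is correct and follows essentially the same route as the paper: both apply the Cauchy--Schwarz bound~\eqref{est: 2.38} to $\psi'_{UV}(A)$, control $\sum_j U_j^*A^*AU_j$ by $\|A\|^2\psi'_{UU}(\mathbbm1)$ (you do this termwise, the paper via positivity of the map $\psi'_{UU}$ applied to $\|A\|^2\mathbbm1-A^*A$ --- the same observation), and then bound the anticommutator term using $\|\psi'_{UV}(\mathbbm1)\|\leq\|\psi'_{UU}(\mathbbm1)\|^{1/2}\|\psi'_{VV}(\mathbbm1)\|^{1/2}$. Your constant $2$ is also what the paper's own argument actually yields, and it implies the stated bound with constant $3$.
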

\begin{proof} The operators $\psi_{UU}'(\mathbbm1)$ and $\psi_{VV}'(\mathbbm1)$ are bounded, since  
    \begin{equation}
        \psi_{UU}'(\mathbbm1)=\sum\limits_{j=1}^\infty U_j^*U_j,\label{def: psiUU'1}
    \end{equation}
    and similarly for $\psi'_{VV}(\mathbbm1)$. By~\eqref{est: 2.38}, we have 
    \begin{equation}
        \|\psi_{UV}'(A)\|\leq \|\sum\limits_{j=1}^\infty V_j^*V_j\|^{\frac{1}{2}}\|\sum\limits_{j=1}^\infty U_j^*A^*AU_j\|^{\frac{1}{2}},
    \end{equation}
    which can be rewritten as  
    \begin{equation}
        \| \psi_{UV}'(A)\|\leq \| \psi_{VV}'(\mathbbm1)\|^{\frac{1}{2}}\|\psi_{UU}'(A^*A)\|^{\frac{1}{2}}.
    \end{equation}
    Clearly, $\psi'_{UU}$ is a positive map. Hence, we have, for any self-adjoint operator $B$, 
\begin{equation}
\| \psi'_{UU}(B)\|\leq \|B\|\|\psi'_{UU}(\mathbbm1)\|.\label{est: psiUUB}
\end{equation}
Indeed, if we let $C:=\|B\|\mathbbm 1-B\geq 0,$ then we have
\begin{equation}
0\leq \psi_{UU}'(C)=\|B\|\psi'_{UU}(\mathbbm1)-\psi'_{UU}(B), 
\end{equation}
giving~\eqref{est: psiUUB}. Hence 
\begin{equation}
\|\psi_{UV}'(A)\|\leq \|A\|\|\psi_{UU}'(\mathbbm1)\|^{\frac{1}{2}}\|\psi_{VV}'(\mathbbm1)\|^{\frac{1}{2}},\label{est: psiUV'A}
\end{equation}
which yields 
\begin{equation}
\|\psi'_{UV}(\mathbbm1)\|\leq \| \psi'_{UU}(\mathbbm1)\|^{\frac{1}{2}}\| \psi'_{VV}(\mathbbm1)\|^{\frac{1}{2}}.
\end{equation}
Therefore, we obtain 
\begin{equation}
\| \{\psi'_{UV}(\mathbbm1), A\}\|\leq 2 \|A\|\| \psi'_{UU}(\mathbbm1)\|^{\frac{1}{2}}\| \psi'_{VV}(\mathbbm1)\|^{\frac{1}{2}},
\end{equation}
which together with estimate~\eqref{est: psiUV'A}, definition~\eqref{G'UV}, yields~\eqref{est: psiUVA}. \end{proof}
For $U$ and $V$ as in Lemma~\ref{lem: GUV}, define 
\begin{equation}
        G_{UV}(\rho)=\sum\limits_{j=1}^\infty \left( U_j \rho V_j^*-\frac{1}{2}\{ V_j^*U_j, \rho\}\right).
\end{equation}
{\red This map is dual to $G'_{UV}$ in the sense that $\Tr(AG_{UV}(\rho))=\Tr(G'_{UV}(A)\rho)$. With the latter relation, Lemma~\ref{lem: GUV} implies 
\begin{corollary}\label{cor GUV}    We have the estimate 
    \begin{equation}
\| G_{UV}(\rho)\|_1\leq 3{\red \|\psi_{UU}'(\mathbbm1)\|^{\frac{1}{2}}\|\psi_{VV}'(\mathbbm1)\|^{\frac{1}{2}}}\| \rho\|_1. 
    \end{equation}
\end{corollary}
Below, we use the following property of completely positive maps.}
\begin{lemma}\label{lem: 6.3} Let $\beta$ be a linear, completely positive map on $\mathcal S_1$. Then for any bounded operators $A$, $B$, $T$ and $V$ and for all $\rho\in \mathcal S_1^+$, we have
    \begin{equation}
        |\Tr(A\beta(T\rho V)B)|\leq (\Tr(A\beta(T\rho T^*)A^*))^{\frac{1}{2}} (\Tr(B^* \beta(V^*\rho V)B))^{\frac{1}{2}}.\label{6.17a}
    \end{equation}
\end{lemma}
 \begin{proof} We use that by the unitary dilation theorem (see~\cite{A}, Theorem 6.7), there exists a Hilbert space $\mathcal K$, a density operator $R$ on $\mathcal K$ and a unitary operator $J$ on $\mathcal H\times \mathcal K$ s.t. 
 \begin{equation}
\beta(\rho)=\Tr^{\mathcal K} (J(\rho\otimes R)J^*),\label{6.13b}
 \end{equation} 
 where $\Tr^{\mathcal K}$ is the partial trace in $\mathcal K$ (see e.g.~\cite{Ke}). For brevity, in the rest of this proof, we omit the tensor or product sign $\otimes$ in $\rho \otimes R$ and $\psi_i\otimes \varphi_j$, and write $\rho R$ and $\psi_i\varphi_j$, respectively. Substituting~\eqref{6.13b} into the l.h.s. of~\eqref{6.17a} and writing out the trace explicitly, we find
 \begin{align}
      \Tr(A \beta(T\rho V)B) =& \Tr_{\mathcal H\otimes \mathcal K} (A J(T\rho V  R)J^*B)\nonumber\\
      =& \sum\limits_{i,j} \langle \psi_i \varphi_j, A J(T\rho V R)J^*B\psi_i\varphi_j\rangle,
 \end{align}
 where $\{\psi_i\}$ and $\{\varphi_j\}$ are orthogonal basis in $\mathcal H$ and $\mathcal K$, respectively. Using the Cauchy-Schwarz inequality twice yields 
 \begin{align}
    & |\Tr(A \beta_t(T\rho V)B))|\nonumber\\
    \leq &\sum\limits_{i,j} \| (\rho^{1/2}T^* R^{1/2})J^* A^* \psi_i\varphi_j\|\|(\rho^{1/2}V R^{1/2})J^*B \psi_i\varphi_j\|\nonumber\\
    \leq & \left( \sum\limits_{i,j}  \| (\rho^{1/2}T^* R^{1/2})J^* A^* \psi_i\varphi_j\|^2 \right)^{\frac{1}{2}} \left( \sum\limits_{i,j}  \|(\rho^{1/2}V R^{1/2})J^* B\psi_i\varphi_j\|^2 \right)^{\frac{1}{2}}\nonumber\\
    =& \left(\sum\limits_{i,j} \langle \psi_i\varphi_j, AJ(T\rho T^*R)J^*A^*\psi_i\varphi_j\rangle\right)^{\frac{1}{2}}\nonumber\\
&\times\left(\sum\limits_{i,j} \langle \psi_i\varphi_j, B^*J(V^*\rho VR)J^*B\psi_i\varphi_j\rangle\right)^{\frac{1}{2}}.
 \end{align}
Since $\{ \psi_i\varphi_j\equiv \psi_i\otimes \varphi_j\}$ is an orthogonal basis in $\mathcal H \otimes \mathcal K$, this gives
 \begin{align}
 & |\Tr(A \beta_t(T\rho V)B))| \nonumber\\ \leq & \left(\Tr_{\mathcal H\otimes \mathcal K}(AJ (T\rho T^* R)J^*A^* )\right)^{\frac{1}{2}}\left( \Tr_{\mathcal H\otimes \mathcal K}(B^*J (V^*\rho V R)J^* B)\right)^{\frac{1}{2}}\nonumber\\
    =&\left( \Tr_{\mathcal H}(A\Tr^{\mathcal K}(J (T\rho T^*  R)J^*)A^* )\right)^{\frac{1}{2}} \left(\Tr_{\mathcal H}(B^*\Tr^{\mathcal K}(J (V^*\rho V  R)J^*)B )\right)^{\frac{1}{2}}.
 \end{align}
 Using~\eqref{6.13b} in the reverse direction, this yields 
 \begin{equation}
 |\Tr(A\beta_t(T\rho V)B)|\leq (\Tr (A\beta_t(T\rho T^*)A^*) )^{\frac{1}{2}} (\Tr (B^*\beta_t(V^*\rho V)B))^{\frac{1}{2}},
 \end{equation}
 which implies~\eqref{6.17a}.\end{proof}

 \section{Proof of Propositions~\ref{prop 4.1} and~\ref{prop: 2.9}}\label{sec: pf prop}
 \subsection{Proof of Proposition~\ref{prop 4.1}: The operators $L_{\zeta,\tilde \zeta}$ and $\beta_{t,\zeta,\tilde \zeta}$}\label{sec: pf 2.1}
 (a) The fact that the operator-family $L_{0,\zeta,\tilde \zeta}$ is bounded from $\hat R(\mathcal S_1)$ to $\mathcal S_1$ for every $\zeta,\tilde \zeta\in S_a^n$ follows from Condition~(AH). Corollary~\ref{cor GUV}, Eqs.~\eqref{eq.Grho} and~\eqref{4.6} and Condition (AW) on the $W_{j,\zeta}$'s imply
     \begin{equation}\label{est: Gzeta}
         \|G_{\zeta,\tilde \zeta}\|_{1}^{op}<\infty, \quad \forall \zeta,\tilde \zeta\in S_a^n,
      \end{equation}
     and therefore statement (a).  
     (b) By the definition of the operators $T_{\xi,\eta}, \xi, \eta\in \mathbb R^n$, we have, for any $\lambda\in \hat R(\mathcal S_1)=\{\hat R(\mu)\,: \, \mu\in \mathcal S_1\},$
     \begin{align}
     L_{\xi,\eta}\lambda=&-iT_\xi[H,T_\xi^{-1}\lambda T_\eta]T_\eta^{-1}\nonumber\\
     &+\sum\limits_{j=1}^\infty T_\xi\left( W_{j} T_\xi^{-1} \lambda T_\eta W_j^*-\frac{1}{2} \left\{ W_j^*W_j,T_\xi^{-1}\lambda T_\eta\right\} \right)T^{-1}_\eta\nonumber\\
     =& -i(H_\xi \lambda -\lambda H_\eta)\nonumber\\
     &+\sum\limits_{j=1}^\infty  \left( W_{j,\xi}\lambda W_{j,\eta}^*-\frac{1}{2}\left(W_{j,\xi}^*W_{j,\xi}\lambda+\lambda W_{j,\eta}^*W_{j,\eta}\right)\right),\label{eq.4.2}
     \end{align}
     where $W_{j,\xi}=T_\xi W_jT_\xi^{-1}$ and $W_{j,\xi}^*=T_\xi W_j^*T_\xi^{-1}=(W_{j,\xi})^*$.
     Eq.~\eqref{eq.4.2} and Conditions (AH) and (AW) imply that $L_{\xi,\eta}: \hat R(\mathcal S_1)\to \mathcal S_1$ has an analytic continuation in $\xi$ and $\eta$ from $\mathbb R^n \times \mathbb R^n$ to $S_a^n\times S_a^n$ and this continuation is of the form~\eqref{eq.form}-\eqref{eq.Grho}. \par

     (c) We recall the definition~\eqref{eq: L0zeta} of $L_{0,\zeta,\tilde \zeta}$ and
     \begin{equation}\label{eq: L0ztz}
     L_{0,\zeta,\tilde\zeta}=L^{\re}_{0}+L^{\im}_{0},
     \end{equation}
     where, for $\lambda\in \mathcal S_1$, 
     \begin{equation}\label{eq: Lre}
     L^{\re}_{0}\lambda:=-i\left( \re H_\zeta \lambda-\lambda \re H_{\tilde \zeta}\right)\text{ and } L^{\im}_{0}\lambda:=\im H_\zeta \lambda-\lambda \im H_{\tilde \zeta},
     \end{equation}
     with 
     \begin{equation}
     \re H_\zeta=\frac{1}{2}\left(H_\zeta+H_\zeta^*\right)\text{ and } \im H_\zeta=\frac{1}{2i}\left(H_\zeta-H_\zeta^* \right).
     \end{equation}
     Using~\eqref{eq: Lre}, we obtain 
     \begin{equation}\label{est: L1+G1}
     \|L_{0}^{\im}\lambda\|_1\leq \left(\|\im H_\zeta\|+\|\im H_{\tilde \zeta}\|\right)\|\lambda\|_1
     \end{equation}
     which, by Condition (AH) (see~\eqref{con: AH}), implies that, for all $\zeta,\tilde \zeta\in S_a^n, \quad \re\zeta=\re\tilde \zeta =0$,
     \begin{equation}
         \| L_{0,\zeta,\tilde \zeta}-L^{\re}_{0}\|_1^{op}\leq \|\im H_\zeta\|+\| \im H_{\tilde \zeta}\|<\infty.\label{re: 51}
     \end{equation}
Furthermore, 
     \begin{equation}
     e^{L^{\re}_{0}t}\lambda=e^{-i\re H_\zeta t} \lambda e^{i\re H_{\tilde \zeta}t}
     \end{equation}
     and therefore, since 
     \begin{equation}
     \begin{aligned}
       |e^{L^{\re}_{0}t}\lambda |=&|e^{-i\re H_\zeta t} \lambda e^{i\re H_{\tilde \zeta}t}|=|e^{-i\re H_{\tilde \zeta}t} \lambda e^{i\re H_{\tilde \zeta}t}|\\
       =&e^{-i\re H_{\tilde \zeta}t}|\lambda| e^{i\re H_{\tilde \zeta}t},
       \end{aligned}
     \end{equation} 
     we have
     \begin{equation}\label{eq: eLre}
     \| e^{L^{\re}_{0}t}\lambda\|_1=\|\lambda\|_1.
     \end{equation}
     Hence, by the standard Araki-Dyson perturbation expansion argument, $L_{\zeta,\tilde \zeta}$ generates the bounded evolution 
     \begin{equation}\label{def: 2.37a}
     \beta_{t,\zeta,\tilde \zeta}=e^{L_{\zeta,\tilde \zeta}t},\quad t\in \mathbb R 
     \end{equation}
     (for a precise bound, see Lemma~\ref{lem: +} below). (Another way to prove this is to use that $\sigma (L_{\zeta,\tilde \zeta}) \subset \{ z\in \mathbb C\, : \, |\re z|\leq C \},\, \forall (\zeta,\tilde \zeta)\in \mathcal S_a^n\times \mathcal S_a^n$, with $C=\sup\limits_{|\eta|=\nu, \, |\tilde \eta|=\nu }\| \im H_\zeta\|+\|\im H_{\tilde \zeta}\|+\|G_{\zeta,\tilde\zeta}\|_{1}^{op}$, and that for any $z \in \mathbb C$ with $|\re z|>C+1$, the following estimate
     \begin{equation}
         \|(L_{\zeta,\tilde \zeta}-z)^{-1}\|\leq (|\re z|-C-1)^{-1}\label{before box}
     \end{equation}
     holds, and then use the Hille-Yosida theorem.)

 The next lemma gives a precise bound on the one-parameter group $\beta_{t,\zeta,\tilde\zeta}$.
 \begin{lemma}\label{lem: +} For each $\zeta$, $\tilde \zeta\in S_a^n$. The operator $L_{\zeta,\tilde\zeta}$ generates the one-parameter group $\beta_{t,\zeta,\tilde \zeta}=e^{L_{\zeta,\tilde \zeta}t}$ and this group satisfies the estimate
 \begin{equation}
     \| \beta_{t,\zeta,\tilde \zeta}\|_{1}^{op}\leq e^{4t \left( \| G_{\zeta,\tilde\zeta}\|+\|\im H_\zeta\|+\|\im H_{\tilde \zeta}\|\right)}, \text{ for }\re\zeta=\re\tilde \zeta=0.\label{est: betazeta}
 \end{equation}
     
 \end{lemma}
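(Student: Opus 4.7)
The plan is to decompose $L_{\zeta,\tilde\zeta}=L_{0}^{\re}+P$, where $P:=L_{0}^{\im}+G_{\zeta,\tilde\zeta}$ is bounded on $\mathcal S_1$, and to apply the standard bounded perturbation theorem for $C_0$-groups (the Dyson--Araki expansion).

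First, I would verify that $L_{0}^{\re}$ generates a $C_0$-group of isometries on $\mathcal S_1$. For $\zeta=i\eta$ and $\tilde\zeta=i\tilde\eta$ pure imaginary, analytic continuation of the identity $H_\xi^*=H_\xi$ (valid for $\xi\in\mathbb R^n$, since $H_\xi=T_\xi H T_\xi^{-1}$ is unitarily conjugate to the self-adjoint $H$) gives $H_\zeta^*=H_{-\zeta}$, so $\re H_\zeta=\tfrac12(H_\zeta+H_{-\zeta})$ is symmetric. Together with Condition (AH), which provides the boundedness of $\im H_\zeta$, this identifies $\re H_\zeta$ as a self-adjoint operator with the same domain as $H$. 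Hence $e^{-i\re H_\zeta t}$ and $e^{i\re H_{\tilde\zeta}t}$ are unitaries on $\mathcal H$, and $e^{L_{0}^{\re}t}\lambda:=e^{-i\re H_\zeta t}\,\lambda\,e^{i\re H_{\tilde\zeta}t}$ is a $C_0$-group of isometries on $\mathcal S_1$ by unitary invariance of the trace norm, as already recorded in~\eqref{eq: eLre}.

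Second, I would bound the perturbation $P$. The estimate~\eqref{est: L1+G1} gives $\|L_{0}^{\im}\|_{1}^{op}\leq \|\im H_\zeta\|+\|\im H_{\tilde\zeta}\|$, finite by Condition (AH), while Proposition~\ref{prop 4.1}(a) (via Corollary~\ref{cor GUV} and~\eqref{4.6}) yields $\|G_{\zeta,\tilde\zeta}\|_{1}^{op}<\infty$ for all $\zeta,\tilde\zeta\in S_a^n$. Consequently
\begin{equation*}
    M:=\|\im H_\zeta\|+\|\im H_{\tilde\zeta}\|+\|G_{\zeta,\tilde\zeta}\|_{1}^{op}
\end{equation*}
is a valid upper bound for $\|P\|_{1}^{op}$.

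Third, since $L_{0}^{\re}$ generates a $C_0$-group of isometries on $\mathcal S_1$ and $P$ is bounded, the standard bounded perturbation theorem produces a $C_0$-group $\beta_{t,\zeta,\tilde\zeta}=e^{L_{\zeta,\tilde\zeta}t}$ for $t\in\mathbb R$, represented as the norm-convergent Dyson series
\begin{equation*}
    \beta_{t,\zeta,\tilde\zeta}=\sum_{n=0}^{\infty}\int_{0\leq s_1\leq\cdots\leq s_n\leq t} e^{L_{0}^{\re}(t-s_n)}\,P\,e^{L_{0}^{\re}(s_n-s_{n-1})}\,P\cdots P\,e^{L_{0}^{\re}s_1}\,ds_1\cdots ds_n.
\end{equation*}
Term-wise estimation using $\|e^{L_{0}^{\re}s}\|_{1}^{op}=1$ and $\|P\|_{1}^{op}\leq M$ gives $\|\beta_{t,\zeta,\tilde\zeta}\|_{1}^{op}\leq \sum_{n\geq 0}\frac{(M|t|)^n}{n!}=e^{M|t|}$, which yields~\eqref{est: betazeta} (the factor $4$ in the statement leaves ample slack over the sharper constant $1$ obtained here).

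The only real technical point, and hence the main obstacle, is justifying self-adjointness of $\re H_\zeta$ on $\mathcal D(H)$, so that $e^{-i\re H_\zeta t}$ is a well-defined unitary group; this is where Condition (AH) is essentially used. Once this is secured, the boundedness of $P$ and the perturbation theorem make the remainder of the argument routine.
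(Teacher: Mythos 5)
Your proposal is correct and follows essentially the same route as the paper: the same splitting $L_{\zeta,\tilde\zeta}=L_0^{\re}+\bigl(L_0^{\im}+G_{\zeta,\tilde\zeta}\bigr)$, the isometry property~\eqref{eq: eLre} of $e^{tL_0^{\re}}$, the bounds~\eqref{est: L1+G1} and~\eqref{est: Gzeta} on the perturbation, and a term-wise estimate of the Araki--Dyson series. Your version even yields the sharper constant $e^{M|t|}$ in place of $e^{4Mt}$, so the stated bound follows a fortiori.
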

 \begin{proof} Since $\beta_{t,\zeta,\tilde \zeta}=T_{\xi,\tilde \xi}\beta_{t,i\eta,i\tilde \eta}T_{\xi,\tilde \xi}^{-1}$, for $\zeta=\xi+i\eta$ and $\tilde \zeta=\tilde \xi+i\tilde \eta$, it suffices to consider $\zeta,\tilde \zeta\in S_a^n$ with $\re\zeta=\re\tilde \zeta=0$. Now, write out the Araki-Dyson-type series
 \begin{equation}
     \beta_{t,\zeta,\tilde \zeta}=e^{tL_0^{\re}}+\sum\limits_{j=1}^\infty I_{j,t},\label{eq: 30}
 \end{equation}
 where operators $I_{j,t}, j=1,\cdots,$ are given (with $t_0=t$), by
 \begin{equation}\label{eq: 31}
 \begin{aligned}
 I_{j,t}=&\int_0^{t}\int_0^{t_1}\cdots \int_{0}^{t_{j-1}} e^{(t-t_1)L_0^{\re}}(L_{\zeta,\tilde \zeta}-L_0^{\re})\cdots\\
 &\qquad \qquad \qquad \qquad\times e^{(t_{j-1}-t_j)L_0^{\re}}(L_{\zeta,\tilde \zeta}-L_0^{\re})e^{t_jL_0^{\re}}dt_j\cdots dt_1.
 \end{aligned}
 \end{equation}
 Taking the norm of this expression, using~\eqref{est: Lt4} and converting the integral over the simplex $\{ 0\leq t_j \leq t_{j-1}\leq \cdots \leq t_1\leq t\}$ to the integral over the $j$-cube $[0,t]^j$ yields
 \begin{equation}
  \begin{aligned}
 \|I_{j,t,\zeta,\tilde \zeta}\|_1^{op}\leq & \frac{4^jt^j}{j!}\left(\|L_{\zeta,\tilde \zeta}-L_0^{\re}\|_1^{op}\right)^j, \qquad t\geq 0,\, j=1,\cdots.
  \end{aligned}   \label{eq: 42}
 \end{equation}
 Eq.~\eqref{eq: 30}, together with the bound $\|e^{tL_0^{\re}}\|_{1}^{op}=1$ (see~\eqref{eq: eLre}), the relation $L_{\zeta,\tilde\zeta}-L_0^{\re}=L_0^{\im}+G_{\zeta,\tilde \zeta}$ (see~\eqref{eq.form} and~\eqref{eq: L0ztz}) and estimates~\eqref{est: Gzeta} and~\eqref{est: L1+G1}, implies~\eqref{est: betazeta}.\end{proof}

 Furthermore, using the Duhamel formula, we compute formally, for every $j$, 
     \begin{equation}
        \p_{\bar\zeta_j^{\#}} e^{L_{\zeta,\tilde \zeta}t}=\int_0^t e^{L_{\zeta,\tilde \zeta}(t-s)} \p_{\bar\zeta_j^\#}L_{\zeta,\tilde \zeta}e^{L_{\zeta,\tilde \zeta}s}ds=0,
     \end{equation}
     where $\zeta_j^\#=\zeta_j$ or $\tilde \zeta_j$. However, since, in general, the oeprators $L_{\zeta,\tilde \zeta}$ are unbounded, this formula has to be justified. We proceed differently.\par
     First, approximating $H_\zeta$ by bounded operators $H_\zeta(ia)(H+ia)^{-1}$, we can show that $e^{-iH_\zeta t}$ is analytic in $\zeta\in S_a^n, \forall t\in \mathbb R$. The latter implies that $e^{L_{0,\zeta,\tilde \zeta}t}\lambda =e^{-iH_\zeta t}\lambda e^{iH_\zeta t}$ is analytic in $\zeta,\tilde \zeta\in S_a^n$ for all $t\in \mathbb R$. Now, using the Duhamel principle
     \begin{equation}
         e^{L_{\zeta,\tilde \zeta}t}=e^{L_{0,\zeta,\tilde \zeta}t}+\int_0^t e^{L_{0,\zeta,\tilde \zeta}(t-s)} G_{\zeta,\tilde \zeta}e^{L_{\zeta,\tilde \zeta}s}ds
     \end{equation}  
     and analyticity of $e^{L_{0,\zeta,\tilde \zeta}t}$ and $G_{\zeta,\tilde \zeta}$ in $\zeta,\tilde\zeta\in S_a^n$, we find
     \begin{equation}
         \p_{\bar\zeta_j^{\#}} e^{L_{\zeta,\tilde \zeta}t}=\int_0^t e^{L_{0,\zeta,\tilde \zeta}(t-s)} G_{\zeta,\tilde \zeta}\p_{\bar\zeta_j^\#}e^{L_{\zeta,\tilde \zeta}s}ds,
     \end{equation}
     which implies that $\p_{\bar\zeta_j^\#}e^{L_{\zeta,\tilde \zeta}t}=0$ for $t$'s sufficiently small and therefore for all $t$'s.
     Hence, $e^{L_{\zeta,\tilde \zeta}t}$ is analytic as an operator-function of $(\zeta,\tilde \zeta)\in \mathcal S_a^n\times \mathcal S_a^n$.\par

 (d) Recall $\beta_{t,\zeta}\equiv \beta_{t,\zeta,-\zeta}$ and fix $\zeta=i\eta\in S_a^n$, $\eta\in \mathbb R^n$. By Lemma~\ref{lem: +}, we have $\beta_{t,\zeta}(\rho)\in \mathcal S_1$ for $\rho \in \mathcal S_1$. Next, we prove $\beta_{t,\zeta}(\rho)\geq 0$ for $\rho\in \mathcal S_1^+$. Let $\rho\in \mathcal S_1^+$ be s.t. 
 \begin{equation}\label{con: nu}
 \nu:=T_{-\zeta}\rho T_{-\zeta}\in \mathcal S_1^+. 
 \end{equation}
Since $T_{-\zeta}$ is self-adjoint and $\rho\geq 0$, we have that $\nu\geq 0$.

 Next, let $\psi\in \mathcal D(T_{\zeta})$. Then using the analytic continuation (in $\zeta$ and $\tilde \zeta$), we obtain that
 \begin{equation}
     \langle \psi, \beta_{t,\zeta}(\rho)\psi\rangle=\langle T_\zeta \psi, \beta_t(\nu) T_\zeta\psi\rangle\geq 0.
 \end{equation}
Since the set of all $\rho \in \mathcal{S}_1^+$ satisfying~\eqref{con: nu} is dense in $\mathcal{S}_1^+$, and $\beta_{t,\zeta}$ is bounded on $\mathcal{S}_1^+$ for all $\zeta \in S_a^n$, it follows that
\[
\beta_{t,\zeta}(\rho) \geq 0 \quad \text{for all } \zeta \in S_a^n \text{ with } \operatorname{Re} \zeta = 0.
\]
  \hfill $\qed$

 \subsection{Proof of Proposition~\ref{prop: 2.9}}\label{sec: pf prop2.9}

 Using the decomposition~\eqref{decom: lambda}, estimate~\eqref{eq: 4.7} can be reduced to proving 
 \begin{equation}\label{eq: 3.1}
 \|  \hat \chi_X \beta_t \hat \chi_Y \rho\|_{1}\leq Ce^{-2\mu(d_{XY}-ct)}\|\rho\|_{1},\quad \forall \rho \in \mathcal S_1^+,
 \end{equation}
 for some constant $C=C(n,c,\mu)>0$ depending on $n,c,\mu$. Let $\{{X_j}\}_{j=1}^{j=N_1}$ and $\{{Y_j}\}_{j=1}^{j=N_2}$ be decompositions of $X$ and $Y$, with ${X_j}, j=1,\cdots,N_1$, and ${Y_j}, j=1,\cdots,N_2$, containing in the balls centered at $x_j\in X$ and $y_j\in Y$, respectively, of the radius $r=\frac{\epsilon d_{XY}}{2}$. We have $\sum\limits_{k=1}^{N_1}\chi_{X_k}=\chi_X$ and $\sum\limits_{j=1}^{N_2}\chi_{Y_j}=\chi_Y$. Inserting the above partitions of unity into $\Tr(\hat \chi_X\beta_t\hat\chi_Y\rho),$ we find 
 \begin{equation}
 \Tr(\hat\chi_X \beta_t\hat \chi_Y \rho)=\sum\limits_{k=1}^{N_1}\sum\limits_{j_1=1}^{N_2}\sum\limits_{j_2=1}^{N_2} \Tr(\hat\chi_{X_k}\beta_t(\chi_{Y_{j_1}}\rho \chi_{Y_{j_2}})). 
 \end{equation}
 Using this and Lemma~\ref{lem: 6.3}, we obtain 
 \begin{align}
     0 \leq & \Tr(\hat \chi_X \beta_t(\hat \chi_Y \rho))\nonumber\\
     \leq&\sum\limits_{k=1}^{N_1} \sum\limits_{j_1=1}^{N_2}\sum\limits_{j_2=1}^{N_2}|\Tr(\hat \chi_{X_k} \beta_t( \chi_{Y_{j_1}} \rho\chi_{Y_{j_2}}))|\nonumber\\
     \leq & \sum\limits_{k=1}^{N_1}\sum\limits_{j_1=1}^{N_2}\sum\limits_{j_2=1}^{N_2}\left( \Tr(\hat \chi_{X_k} \beta_t(\hat\chi_{Y_{j_1}}\rho))\right)^{\frac{1}{2}}\left( \Tr(\hat \chi_{X_k} \beta_t(\hat\chi_{Y_{j_2}}\rho))\right)^{\frac{1}{2}}\nonumber\\
     =& \sum\limits_{k=1}^{N_1}\left( \sum\limits_{j=1}^{N_2} (\Tr(\hat \chi_{X_k} \beta_t(\hat \chi_{Y_{j}}\rho)))^{1/2}\right)^2.
     \end{align}
 By estimate~\eqref{UYest}, this yields 
 \begin{align}
     \Tr(\hat \chi_X \beta_t(\hat \chi_Y \rho))\leq &\sum\limits_{k=1}^{N_1} \left( \sum\limits_{j=1}^{N_2} 2e^{-\nu(1-\epsilon/2)d_{X_{k}Y_j}-\epsilon d_{XY}-\nu c't)}\| \hat\chi_{Y_j}\rho\|_{1}^{1/2}\right)^2\nonumber\\
     =:& 4e^{2\nu c't+2\nu \epsilon d_{XY}} M(\rho),\label{3.33b}
 \end{align}
 for any $\epsilon\in (0,1)$. To estimate $M(\rho)$, we proceed as in~\cite{SigWu}, Eqs (2.22)-(2.33). Namely, we let $\rho_j=\hat \chi_{Y_j}\rho$ and $\nu'=\nu(1-\epsilon/2)$ and write 
 \begin{align}
 M(\rho)=& \sum\limits_{k=1}^{N_1} \left( \sum\limits_{j=1}^{N_2}e^{-\nu'd_{X_{k}Y_j}}\|\rho_j\|_{1}^{\frac{1}{2}}\right)^2\nonumber\\
 =&  \sum\limits_{k=1}^{N_1} \sum\limits_{j_1=1}^{N_2} \sum\limits_{j_2=1}^{N_2}e^{-\nu'(d_{X_{k}Y_{j_1}}+d_{X_kY_{j_2}})}\|\rho_{j_1}\|_{1}^{\frac{1}{2}}\|\rho_{j_2}\|_{1}^{\frac{1}{2}}\nonumber\\
 \leq &   \sum\limits_{j=1}^{N_2} \|\rho_{j}\|_{1}C_{XY}
 \end{align}
 where $C_{XY}$ is given by (see~\cite{SigWu}, Eqs. (2.23)-(2.26))
 \begin{equation}
 C_{XY}:=  \sum\limits_{k=1}^{N_1} \sum\limits_{j_2=1}^{N_2} e^{-\nu'(d_{X_{k}Y_{j_1}}+d_{X_kY_{j_2}})}.
 \end{equation}
 By Eq.~(2.32) of~\cite{SigWu} and the relation $\sum\limits_{j=1}^{N_2}\|\rho_j\|_{1}=\sum\limits_{j=1}^{N_2}\Tr(\hat\chi_{Y_j}\rho)=\Tr \rho=1$, we have 
 \begin{equation}
 M(\rho)\leq C d_{XY}^{2(n-1)}e^{-2\nu' d_{XY}},
 \end{equation}
 for some constant $C=C(\nu', n, \epsilon)=C(\nu,n,\epsilon)>0$. This, together with~\eqref{3.33b} and the notation $\mu=\nu(1-\frac{5\epsilon}{2})$ and $c=\frac{c'}{1-\frac{5\epsilon}{2}}$, gives~\eqref{eq: 3.1} yielding therefore Proposition~\ref{prop: 2.9}. \hfill $\qed$


\bigskip

\paragraph{\bf Acknowledgment}

The first author is grateful to J\'er\'emy Faupin, Marius Lemm, Donghao (David) Ouyang and Jingxuan Zhang for discussions and enjoyable collaboration. The second author sincerely appreciates Avy Soffer's interest in this project. {\red The authors are grateful to the anonymous referee for important remarks and suggestions. In particular, Proposition B.1 was worked in response to one of the referee's comments.} 
The research of I.M.S. is supported in part by NSERC Grant NA7901.
X. W. is supported by 2022 \emph{Australian Laureate Fellowships} grant FL220100072. Her research was also supported in part by NSERC Grant NA7901 and NSF-DMS-220931.  
Parts of this work were done while the second author visited University of Toronto and Rutgers University.

\medskip

\paragraph{\bf Declarations}
\begin{itemize}
	\item Conflict of interest: The Authors have no conflicts of interest to declare that are relevant to the content of this article.
	\item Data availability: Data sharing is not applicable to this article as no datasets were generated or analysed during the current study.
\end{itemize}

\bigskip

\appendix 

\section{On existence theory for vNLE}\label{sec:vNLEexist}
The existence theory for the vNL equation is based on the decomposition 
    \begin{equation}\label{L0+G}
        L=L_0+G
    \end{equation}
    of the vNL generator $L$, with the von Neumann and Lindblad parts, $L_0$ and $G$, given by 
    \begin{equation}\label{L0G}
        L_0\rho=-i[H,\rho]\,\, \text{ and }\,\,G\rho=\sum\limits_{j=1}^\infty\left(W_j\rho W_j^*-\frac{1}{2}\left\{ W_j^*W_j,\rho\right\} \right).
    \end{equation}
    
    Under the conditions above the operator $L_0$ is defined on the set 
    \begin{equation}
     \mathcal{D}(L_0) := \left\{ \rho \in \mathcal{S}_1 \,:\,\begin{aligned}
    & \rho: \mathcal{D}(H) \to \mathcal{D}(H) \text{ and } [H, \rho] \text{ extends } \\
    & \text{from } \mathcal{D}(H) \text{ to } \mathcal{H} \text{ as an element of } \mathcal{S}_1
    \end{aligned}
    \right\}.
    \end{equation}
    The latter set contains the subset $\{ (H+i)^{-1}\rho (H-i)^{-1}\, :\, \rho \in \mathcal S_1\}$, which is dense in $\mathcal S_1$ (in the $\mathcal S_1$-norm $\|\lambda\|_1$.). For the second term, $G$, we observe that $\sum\limits_{j=1}^\infty W_j^*W_j$ is a bounded operator, as a weak limit of bounded operators (\cite{RS1}, Theorem~VI.1), and, for any $\rho\in \mathcal S_1^+$, $S_N:=\sum\limits_{j=1}^N W_j \rho W_j^*$ is an increasing sequence of positive, trace-class operators s.t. 
    \begin{equation}
       \|S_N-S_M\|_1= \Tr (S_N-S_M)=\Tr(\rho \sum\limits_{j=M}^N W_j^*W_j)\to 0,\label{Eq. WjWj}
    \end{equation}
    for $N>M\to \infty$, and therefore $S_N$ converges in the $\mathcal S_1$ norm as $N\to \infty$ and its limit $\sum\limits_{j=1}^\infty W_j\rho W_j^*$ is positive trace class operator. This way one can prove that $G$ is a bounded operator on $\mathcal S_1$ (see \cites{Davies5, OS} for details). Hence, the operator $L$ is well defined on $\mathcal D(L)=\mathcal D(L_0)$.  \par

    By the explicit representation $e^{L_0t}\rho=e^{-iHt}\rho e^{iHt}$, the operator $L_0$ generates a one-parameter group $\alpha_t=e^{tL_0}$ of isometries on $\mathcal S_1$ (von Neumann evolution), and therefore, by a standard perturbation theory, since $G$ is bounded, $L$ generates a one-parameter group of bounded operators, $\beta_t=e^{tL}$.

    In conclusion, we prove a bound on $e^{Lt}$ used in Section~\ref{sec: 2}. 
\begin{lemma} We have
    \begin{equation}
        \|e^{Lt}\|_{1}^{op}\leq 4.\label{est: Lt4}
    \end{equation}
\end{lemma}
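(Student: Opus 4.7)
The plan is to exploit the two fundamental structural properties of the semigroup $\beta_t=e^{Lt}$ that were already invoked in Section~\ref{set: 1.1}, namely that under Conditions~(H) and~(W), the Davies/Lindblad theorem guarantees that $\beta_t$ is completely positive and trace preserving on $\mathcal S_1$. Trace preservation is the key input: it follows from the elementary identity $\Tr(L\rho)=0$ for $\rho\in \mathcal D(L)$, which is immediate from the cyclicity of the trace applied to $[H,\rho]$ and to each summand $W_j\rho W_j^*-\tfrac12\{W_j^*W_j,\rho\}$ (and, for the series, from the convergence argument surrounding~\eqref{Eq. WjWj}).

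Granted these two facts, for any $\rho\in \mathcal S_1^+$ the operator $e^{Lt}\rho$ is again positive, and so
\begin{equation}
    \|e^{Lt}\rho\|_1=\Tr(e^{Lt}\rho)=\Tr(\rho)=\|\rho\|_1.
\end{equation}
The remaining step is to extend this from $\mathcal S_1^+$ to all of $\mathcal S_1$ at the cost of a bounded multiplicative constant. For this I would use the same Jordan-type decomposition already employed in Section~\ref{sec: 2} (see~\eqref{decom: lambda}), writing an arbitrary $\lambda\in \mathcal S_1$ as
\begin{equation}
    \lambda=\lambda_+-\lambda_-+i(\lambda_+'-\lambda_-'), \qquad \lambda_\pm,\lambda_\pm'\in \mathcal S_1^+,
\end{equation}
with $\lambda_\pm=\tfrac12(|\Re\lambda|\pm\Re\lambda)$ and $\lambda_\pm'=\tfrac12(|\Im\lambda|\pm\Im\lambda)$.

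Since $\lambda_\pm\ge 0$ and $\lambda_+\lambda_-=0$, one has $\|\lambda_+\|_1+\|\lambda_-\|_1=\|\Re\lambda\|_1\le \|\lambda\|_1$, and symmetrically $\|\lambda_+'\|_1+\|\lambda_-'\|_1\le \|\lambda\|_1$. Applying the triangle inequality together with the identity $\|e^{Lt}\sigma\|_1=\|\sigma\|_1$ for $\sigma\in \mathcal S_1^+$ just established, we get
\begin{equation}
    \|e^{Lt}\lambda\|_1\le \sum_{j=+,-}\bigl(\|e^{Lt}\lambda_j\|_1+\|e^{Lt}\lambda_j'\|_1\bigr)=\sum_{j=+,-}\bigl(\|\lambda_j\|_1+\|\lambda_j'\|_1\bigr)\le 4\|\lambda\|_1,
\end{equation}
using the cruder bound $\|\lambda_\pm\|_1,\|\lambda_\pm'\|_1\le\|\lambda\|_1$ individually in the last step. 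Taking the supremum over $\lambda\in \mathcal S_1$ with $\|\lambda\|_1=1$ yields~\eqref{est: Lt4}.

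There is no real obstacle here: the only thing to be slightly careful about is that trace preservation really does hold on a core, which is ensured by the description of $\mathcal D(L)$ given above (in particular it contains $\{(H+i)^{-1}\mu(H-i)^{-1}:\mu\in\mathcal S_1\}$, which is dense in $\mathcal S_1$), so the identity $\Tr(e^{Lt}\rho)=\Tr(\rho)$ extends from the core to all of $\mathcal S_1^+$ by continuity of $e^{Lt}$ on $\mathcal S_1$ (which in turn is the bounded-perturbation statement already used to construct the semigroup).
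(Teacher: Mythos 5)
Your proposal is correct and follows essentially the same route as the paper: reduce to positive $\lambda$ via the decomposition~\eqref{decom: lambda}, use that $e^{Lt}$ is positivity and trace preserving so that $\|e^{Lt}\rho\|_1=\Tr\rho$ on $\mathcal S_1^+$, and conclude by the triangle inequality with the bound $\|\lambda_\pm\|_1,\|\lambda_\pm'\|_1\le\|\lambda\|_1$. The extra justifications you supply (cyclicity of the trace for $\Tr(L\rho)=0$ on the core, and the density/continuity extension) are consistent with what the paper delegates to the surrounding discussion and to~\cite{OS}.
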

\begin{proof} We use that every $\lambda\in \mathcal S_1$ can be decomposed as in~\eqref{decom: lambda}, with $\lambda_\pm$ and $\lambda_\pm'$ satisfying
\begin{equation}
\|\lambda_\pm\|_{1}\leq \|\lambda\|_{1}\quad \text{ and }\quad \|\lambda_\pm'\|_{1}\leq \|\lambda\|_{1}.\label{est: rhopm}
\end{equation}
Hence, it suffices to consider $\lambda\in \mathcal S_1$, s.t. $\lambda\geq 0$. Since $e^{tL}$ is a positivity and trace preserving map (see of \cite{OS}), we have 
\begin{equation}
  e^{tL}\lambda \geq 0\qquad \text{ and }\qquad  \Tr(e^{tL}\lambda)=\Tr \lambda.
\end{equation}
Hence, due to Eqs.~\eqref{decom: lambda} and~\eqref{est: rhopm}, we have
\begin{equation}
\begin{aligned}
  \| e^{tL}\lambda\|_{1}\leq &\| e^{tL}\lambda_+\|_{1}+\| e^{tL}\lambda_-\|_{1}+\| e^{tL}\lambda_+'\|_{1}+\| e^{tL}\lambda_-'\|_{1}\\
  =& \| \lambda_+\|_{1}+\| \lambda_-\|_{1}+\| \lambda_+'\|_{1}+\| \lambda_-'\|_{1}\\
  \leq & 4\|\lambda\|_{1}.
\end{aligned}
\end{equation}\end{proof}


{\red
\section{Positivity of $c(\mu)$}\label{sec: positivity}
Let $L'$ be the Heisenberg-Lindblad generator~\eqref{def: L'}, which is dual to the von Neumann-Lindblad one,~\eqref{L}; explicitly,
\begin{align}\label{L'}
L'(A)&=i[H,A]+G'(A),\\ 
\label{G'2} G'(A)&=\sum \left( W_j^*AW_j-\frac{1}{2}\{ W_j^* W_j,\, A\}\right).
\end{align}

Define the operator $L'(x)$ by $L'(x)=\frac{1}{i}\n_\xi L'(e^{i\xi\cdot x})\vert_{\xi=0}$. By Eq.~\eqref{eq: ptAt}, $v=L'(x)$ is the Heisenberg-Lindblad velocity  (\cites{Breteaux_2022,Breteaux_2023}). By Lemma~\ref{lem: cnu0} below, it is well defined.

\begin{prop}[Positivity of $c(\mu)$]\label{prop: pos cnu} $c(\mu)\geq 0$, $\forall\, \mu\in (0,a)$. Moreover, if $L'(x) \neq 0$, then  $c(\mu)>0$ for all $\mu\in (0,a)$.
\end{prop}
\begin{proof} By~\eqref{eq: tGzeta sys}, 
\begin{equation}\label{Kineq}\Im H_\zeta+\tilde G_\zeta\geq K_{\zeta},\text{ with }K_\zeta=\Im H_\zeta +G_\zeta^{\#},
\end{equation} 
where 
\begin{equation}
G_\zeta^{\#}:= \frac{1}{4} \sum_j (W_{j,\zeta}^* W_{j,\zeta} - W_{j,-\zeta}^* W_{j,-\zeta}).
\end{equation}
By the definition of $H_{i\eta}$, we have $\langle \psi, H_{i\eta}^*\psi\rangle=\langle \psi, H_{-i\eta}\psi\rangle$, for a dense set of $\psi$'s,  which implies $\Im H_{-i\eta}=-\Im H_{i\eta}$. Moreover, $G_{-i\eta}^\#=-G_{i\eta}^\#$ and therefore $K_{-i\eta}=-K_{i\eta}$. Hence, \begin{align}\label{Kalt}
\text{either }\ \langle\psi_0, K_{i\eta_0}\psi_0 \rangle>0\ \text{ or }\ \langle \psi_0, K_{-i\eta_0}\psi_0\rangle > 0,
\end{align} for some $\psi_0\in \mathcal H$ and $\eta_0\in \mathbb R^n,$ $|\eta_0|= \nu$, or $\langle \psi, K_{i\eta}\psi\rangle=0$ for all $\psi\in \mathcal H$, and  all $\eta\in \mathbb R^n, |\eta|= \nu$. Due to \eqref{Kineq}, this implies that either \\
\begin{equation}
c'(\nu)= \sup_{\zeta=i\eta, |\eta|=\nu} \sup\limits_{\psi\in \mathcal H}\langle \psi, (\Im H_\zeta+\tilde G_\zeta)\psi \rangle>0
\end{equation}
or 
\begin{equation}
\Im H_{i\eta}+\tilde G_{i\eta}=0\qquad \text{ for every }\eta\in \mathbb R^n, |\eta|=\nu.
\end{equation}
In the first case, we have $c'(\nu)>0$. The second case is ruled out by our condition $L'(x)\neq 0$. Indeed, by Lemma~\ref{lem: cnu0} below, $\nabla_\eta(\Im H_{i\eta}+\tilde G_{i\eta})\vert_{\eta=0}=L'(x)\neq 0.$

Finally, we observe that, by~\eqref{def: cmu}, $c'(\nu)\geq0$ (resp. $c(\nu)>0$) $\forall \nu \in (0,a)$ implies $c(\mu)\geq0$ (resp. $c(\mu)>0$) for all $\mu\in (0,a)$.\end{proof}

}


{\red
In the next lemma, we show that  the Heisenberg velocity operator $v=L'(x)$ is well defined and connect it to $c(\mu)$. 
\begin{lemma}\label{lem: cnu0} 
(a) $L'(x)=-i\n_{\xi}L'(e^{i\xi\cdot x})\vert_{\xi=0}$ is well defined and is given explicitly by 
\begin{equation}\label{L'x}
L'(x)=i[H,x]+\tilde G',
\end{equation}
where  
\begin{equation}\label{eq: def G'}
\tilde G':=\frac{1}{2}\sum\limits_{j=1}^\infty i(W_j^*W_j'-(W_j')^*W_j), 
\end{equation}
with $W_j':=i\nabla_\eta W_{j,i\eta}\vert_{\eta=0}=-i[x, W_j]$.

(b) With the notation $\sup A=\sup\limits_{\|\psi\|=1}\langle\psi, A\psi \rangle,$ $c(0)$ is given as
\begin{equation}\label{c0}
c(0)=\sup\limits_{\hat \eta\in S^{n-1}} \sup \big(\hat\eta\cdot L'(x)\big).
\end{equation}
\end{lemma}
 \begin{proof} (a) It is convenient to use a different representation of the map $G'$: 
\begin{align}\label{G'x}
G'(A)&=\frac{1}{2}\sum\limits_{j} ([W_j^*, A] W_j-W_j^*[W_j, A]).
\end{align}
The latter expression, together with the relation $-i \nabla_\xi i [W_j, e^{i\xi\cdot x}]\vert_{\xi=0}=i[W_j, x]=:W_j'$, yields $-i \nabla_\xi G'(e^{i\xi\cdot x})\vert_{\xi=0}=\tilde G'$, which, together with $-i \nabla_\xi i [H, e^{i\xi\cdot x}]\vert_{\xi=0}=i[H, x]$, implies \eqref{L'x}.


(b) Using that $\nabla_\eta H_{i\eta}\vert_{\eta=0}=-[H,x]$, which implies 
\begin{equation}\label{eq: *}
\nabla_\eta\im H_{i\eta}\vert_{\eta=0}=i[H,x],
\end{equation}
we obtain 
\begin{equation}
\Im H_{i\eta}=i[H,x]\cdot \eta+o(|\eta|).
\end{equation}
Furthermore, we expand, using~\eqref{def: tG}, with $\zeta=i\eta$, $\eta\in \mathbb R^n$, with $|\eta|\ll1$,
\begin{equation}\label{eq: tGeta}
\tilde G_{i\eta}=\tilde G'\cdot \eta+o(|\eta|),
\end{equation}
where $\tilde G'$ is given in~\eqref{eq: def G'}. Since $\Im H_\zeta\vert_{\zeta=0}=0$ and $\tilde G_\zeta\vert_{\zeta=0}=0$, we have that 
\begin{align}\label{c'1}
c'(\nu)&=\sup\limits_{\hat\eta \in S^{n-1}}\sup (\hat\eta\cdot\nabla_\eta (\Im H_{i\eta}+\tilde G_{i\eta}))\vert_{|\eta|=0} +o(\nu)\notag\\
&=\sup\limits_{\hat\eta\in S^{n-1}}\sup(\hat\eta\cdot( i[H,x]+\tilde G'))+o(\nu).
\end{align}
 This relation and Eq.~\eqref{L'x} imply~\eqref{c0}.\par

\end{proof}}

\begin{bibdiv}
	\begin{biblist}
        \bib{AW}{book}{
            author={Aizenman, M.},
            author={Warzel, S.},
            title={Random Operators (Vol. 168)},
            publisher={AMS Press},
            date={2015},
        }
        
 \bib{Al}{book}{
        author={Alicki, Robert},
        title={Invitation to quantum dynamical semigroups. Dynamics of Dissipation},
                publisher={Lecture Notes in Physics, Springer Verlag},
        date={2002},
        }  

\bib{AlLe}{book}{
        author={Alicki, Robert},
       author={Lendi, Karl}, 
        title={Quantum Dynamical Semigroups and Applications},
                publisher={Springer Verlag},
        date={1987},
        }  

        \bib{AFPS}{article}{
            author={Arbunich, J.},
        author={Faupin, J.},
            author={Pusateri, F.},
        author={Sigal, I.~M.},
        title={Maximal speed of quantum propagation for the {H}artree
          equation},
        date={2023},
        journal={Comm. Partial Differential Equations},
        volume={48},
        number={4},

        }

          \bib{APSS}{article}{
            author={Arbunich, J.},
            author={Pusateri, F.},
            author={Sigal, I. M.},
            author={Soffer, A.},
            title={Maximal velocity of propagation},
            journal={Lett. Math. Phys.},
            volume={111},
            number={3},
            year={2021},
            pages={1--16}
          }

          \bib{A}{article}{
            author={Attal, S.},
            title={Quantum Channels},
            journal={Lecture notes},
            year={2014},
            url={http://math. univ-lyon1. fr/~ attal/Quantum_Channels. pdf}
          }


    \bib{BoFauSig}{article}{
        author={Bony, J.-F.},
        author={Faupin, J.},
        author={Sigal, I.~M.},
        title={Maximal velocity of photons in non-relativistic {QED}},
        date={2012},
        journal={Adv. Math},
        volume={231},
            number={5},
        pages={3054\ndash 3078},
    }

    \bib{BH}{article}{
        author={Bravyi, S.},
        author={Hastings, M.~B.},
        title={Topological quantum order: Stability under local perturbations},
        date={2010},
        journal={J. Math. Phys.},
        volume={51},
            number={9},
    }
    
    \bib{BHV}{article}{
        author={Bravyi, S.},
        author={Hastings, M.~B.},
        author={Verstraete, F.},
        title={{L}ieb-{R}obinson bounds and the generation of correlations and
            topological quantum order},
        date={2006},
        journal={Phys. Rev. Lett.},
        volume={97},
            number={5},
        pages={050401},
    }
    
    \bib{Breteaux_2022}{book}{
        author={Breteaux, S.},
        author={Faupin, J.},
        author={Lemm, M.},
        author={Sigal, I.~M.},
        title={Maximal speed of propagation in open quantum systems, \textup{in ``{T}he {P}hysics and {M}athematics of {E}lliott {L}ieb}''},
        volume={1},
        date={2022},
        publisher={R. L. {F}rank et al (eds) {EMS} Press},
        pages={109\ndash 130},

        eprint={arXiv:2207.08991},
    }

        \bib{Breteaux_2023}{article}{
        author={Breteaux, S.},
        author={Faupin, J.},
        author={Lemm, M.},
            author={Ouyang, D.},
        author={Sigal, I.~M.},
            author={Zhang, J.},
            title={Light cones for open quantum systems in the continuum},
            journal={Rev. Math. Physics},
            volume={36},
            number={09},
        date={2024},
        page={2460004},
    }


           \bib{BrPet}{book}{
            author={Breuer, H.-P..},
            author={Petruccione, F.},
            title={The Theory of Open Quantum Systems},
            date={2003},
            publisher={Oxford University Press},
               
        }

            \bib{CEPH}{article}{
            author={del Campo, A.},
            author={Egusquiza, I. L.},
            author={Plenio, M. B.},
            author={Huelga, S. F.},
            title={Quantum Speed Limits in Open System Dynamics},
            journal={Phys. Rev. Lett.},
            issue = {5},
            pages = {050403},
            year={2013}
          }


    \bib{CJWW}{article}{
        author={Cedzich, C.},
        author={Joye, A.},
        author={Werner, A.H.},
        author={Werner, R.F.},
        title={Exponential tail estimates for quantum lattice dynamics},
          journal={Annales Henri Poincar{\'e}},
         pages={1--28},
         year={2025},
          }

    \bib{CB}{article}{
        author={Chen, C.-F.},
        author={Brand\~ao, F. G. S. L. },
        title={Fast thermalization from the eigenstate thermalization hypothesis},
        year={2022},
        journal={Adv. Theor. Math. Phys.},
        volume={26},
        number={6},
        pages={1771--1785, arXiv 2112.07646v3 (2023)},
        year={2022},
    }

    \bib{CKB}{article}{
        author={Chen, C.-F.},
        author={Kastoryano, M. J.},
        author={Brand\~ao, F. G. S. L.},
        title={Quantum thermal state preparation},
        date={2023},
        journal={Preprint, arXiv 2303.18224},
    }

    \bib{CL}{article}{
        author={Chen, C.-F.},
        author={Lucas, A.},
        title={Finite speed of quantum scrambling with long range interactions},
        date={2019},
        journal={Phys. Rev. Lett.},
        volume={123},
            number={25},
        pages={250605},
    }

    \bib{CL2}{article}{
        author={Chen, C.-F.},
        author={Lucas, A.},
        author={Yin, C.},
        title={Speed limits and locality in many-body quantum dynamics},
        date={2023},
        journal={Reports on Progress in Physics},
        volume={86},
            number={11},
        pages={116001},
    }

    \bib{CT}{article}{
        author={Combes, J. M.},
        author={Thomas, L.},
        title={Asymptotic behaviour of eigenfunctions for multiparticle Schr\"odinger operators},
        date={1973},
        journal={Commun. math. Phys.},
        volume={34},
        pages={251—-270},
    }

    \bib{CSM}{article}{
        author={Cotler, J.},
        author={Schuster, T.},
        author={Mohseni, M.},
        title={Information-theoretic hardness of out-of-time-order correlators},
        date={2023},
        journal={Phys. Rev. A},
        volume={108},
        issue = {6},
        pages={062608},
    }

          \bib{Davies1}{article}{
            author={Davies, E. B.},
            title={Markovian master equation},
            journal={Commun. math. Phys.},
            volume={39},
            pages={91--110},
            year={1974}
          }
          
          \bib{Davies2}{article}{
            author={Davies, E. B.},
            title={Markovian master equation, II},
            journal={Math. Ann.},
            volume={219},
            pages={147--158},
            year={1976}
          }
          
          \bib{Davies3}{article}{
            author={Davies, E. B.},
            title={Markovian master equations. III},
            journal={Ann. Inst. Henri Poincar\'e},
            volume={11},
            number={3},
            pages={265--273},
            year={1975}
          }
          
          \bib{Davies4}{book}{
            author={Davies, E. B.},
            title={Quantum Theory of Open Systems},
            publisher={Academic Press},
            year={1976}
          }
          
          \bib{Davies5}{article}{
            author={Davies, E. B.},
            title={Quantum dynamic semigroups and the neutron diffusion equation},
            journal={Rep. Math. Phys.},
            volume={11},
            number={2},
            pages={169--188},
            year={1976}
          }

        \bib{DCL}{article}{
        author={Ding, Z.},
        author={Chen, C.-F.},
        author={Lin, L.},
        title={Single-ancilla ground state preparation via Lindbladians},
        date={2023},
        journal={Preprint, arXiv 2308.15676},
    }
    
    \bib{DCMB}{article}{
        author={Dobrynin, D.},
        author={Cardarelli, L.},
        author={M\"uller, M.},
        author={Bermudez, A.},
        title={Compressed-sensing Lindbladian quantum tomography with trapped ions},
        volume={10},
        number={4},
        date={2025},
        journal={Quantum Sci. Technol.},
    }

    \bib{EisOsb}{article}{
        author={Eisert, J.},
        author={Osborne, T.~J.},
        title={General entanglement scaling laws from time evolution},
        date={2006},
        journal={Phys. Rev. Lett.},
        volume={97},
            number={15},
        pages={150404},
    }

    \bib{EldredgeEtAl}{article}{
        author={Eldredge, Z.},
        author={Gong, Z.-X.},
        author={Young, J.~T},
        author={Moosavian, A.~H.},
        author={Foss-Feig, M.},
        author={Gorshkov, A.~V.},
        title={Fast quantum state transfer and entanglement renormalization
            using long-range interactions},
        date={2017},
        journal={Phys. Rev. Lett.},
        volume={119},
        number={17},
        pages={170503},
    }

    \bib{EpWh}{article}{
        author={Epstein, J.~M.},
        author={Whaley, K.~B.},
        title={Quantum speed limits for quantum-information-processing task},
        date={2017},
        journal={Phys. Rev. A},
        volume={95},
            number={4},
        pages={042314},
    }

          \bib{FFFS}{article}{
            author={Falconi, M.},
            author={Faupin, J.},
            author={Fr\"ohlich, J.},
            author={Schubnel, B.},
            title={Scattering theory for Lindblad master equations},
            journal={Comm. Math. Phys.},
            volume={350},
            number={3},
            pages={1185--1218},
            year={2017}
          }

    \bib{FaFr}{article}{
        author={Faupin, J.},
        author={Fr\"ohlich, J.},
        title={Asymptotic completeness in dissipative scattering theory},
        journal={Adv. Math.},
        volume={340},
        pages={300--362},
        year={2018}
      }

    \bib{FLS1}{article}{
        author={Faupin, J.},
        author={Lemm, M.},
        author={Sigal, I.~M.},
        title={Maximal speed for macroscopic particle transport in the
            {B}ose-{H}ubbard model},
        date={2022},
        journal={Phys. Rev. Lett.},
        volume={128},
            number={15},
        pages={150602},
    }
    
    \bib{FLS2}{article}{
        author={Faupin, J.},
        author={Lemm, M.},
        author={Sigal, I.~M.},
        title={On {L}ieb-{R}obinson bounds for the {B}ose-{H}ubbard model},
        date={2022},
        journal={Commun. Math. Phys.},
        volume={394},
            number={3},
        pages={1011\ndash 1037},
    }
   
 \bib{FGP}{article}{
        author={Fr\"ohlich, J.},
        author={Z. Gang},
        author={Pizzo, A.},
        title={A Theory of Quantum Jumps},
        date={2025},
        journal={Commun. in Math. Physics},
        volume={406},
        number={195},
    }

    \bib{FLSZ}{article}{
        author={Faupin, J.},
        author={Lemm, M.},
        author={Sigal, I.~M.},
        author={Zhang, J.},
        title={Macroscopic suppression of supersonic quantum transport},
        date={2025},
        journal={Phys. Rev. Lett.},
        volume={135},
            number={16},
        pages={160405},
    }



    \bib{FrGrSchl}{article}{
        author={Fr\"ohlich, J.},
        author={Griesemer, M.},
        author={Schlein, B.},
        title={Asymptotic completeness for {R}ayleigh scattering},
        date={2002},
        journal={Ann. Henri Poincar\'e},
        volume={3},
        pages={107\ndash 170},
    }

    \bib{GKS}{article}{
        author={Gorini, V.},
        author={Kossakowski, A.},
        author={Sudarshan, E. C. G.},
        title={Completely positive dynamical semigroups of $N$-level systems},
        journal={J. Math. Phys.},
        volume={17},
        number={5},
        year={1976}
      }

    \bib{GS}{book}{
        author={Gustafson, S.~J.},
        author={Sigal, I.~M.},
        title={Mathematical {C}oncepts of {Q}uantum {M}echanics},
        edition={$3$-rd ed.},
        series={Universitext},
        publisher={Springer},
        date={2020},
        ISBN={978-3-030-59561-6; 978-3-030-59562-3},
        url={https://doi.org/10.1007/978-3-030-59562-3},
    }

    \bib{H1}{article}{
        author={Hastings, M.~B.},
        title={{{L}ieb-Schultz-Mattis in higher dimensions}},
        date={2004},
        journal={Phys. Rev. B},
        volume={69},
            number={10},
        pages={104431},
    }

    \bib{H0}{article}{
        author={Hastings, M.~B.},
        title={{Locality in quantum and Markov dynamics on lattices and
                networks}},
        date={2004},
        journal={Phys. Rev. Lett.},
        volume={93},
            number={14},
        pages={140402},
    }
    
    \bib{H2}{article}{
        author={Hastings, M.~B.},
        title={An area law for one-dimensional quantum systems},
        date={2007},
        journal={J. Stat. Mech. Theory Exp.},
        number={8},
        pages={P08024, 14},
        url={https://doi.org/10.1088/1742-5468/2007/08/p08024},
    }
    
    \bib{H3}{article}{
        author={Hastings, M.~B.},
        title={Quantum belief propagation, an algorithm for thermal quantum
            systems},
        date={2007},
        journal={Phys. Rev. B},
        volume={76},
            number={20},
        pages={201102(R)},
    }
    
    \bib{HastKoma}{article}{
        author={Hastings, M.~B.},
        author={Koma, T.},
        title={Spectral gap and exponential decay of correlations},
        date={2006},
        journal={Commun. Math. Phys.},
        volume={265},
        pages={781\ndash 804},
    }

    \bib{HeSk}{article}{
        author={Herbst, I.},
        author={Skibsted, E.},
        title={Free channel {F}ourier transform in the long-range $n$-body
            problem.},
        date={1995},
        journal={J. d'Analyse Math},
        volume={65},
            number={1},
        pages={297\ndash 332},
    }

    \bib{Holevo}{article}{
        author={Holevo, A. S.},
        title={On dissipative stochastic equations in a Hilbert space},
        journal={Probab. Theory Relat. Fields},
        volume={104},
        pages={483--500},
        year={1996}
      }

       \bib{HunSigSof}{article}{
        author={Hunziker, W.},
        author={Sigal, I.~M.},
            author={Soffer, A.},
        title={Minimal escape velocities},
        date={1999},
        journal={Comm. Partial Differential Equations},
        volume={24},
            number={11-12},
        pages={2279\ndash 2295},
    }

          \bib{IngKoss}{article}{
            author={Ingarden, R. S.},
            author={Kossakowski, A.},
            title={On the connection of nonequilibrium information thermodynamics with non-{H}amiltonian quantum mechanics of open systems},
            journal={Ann. Physics},
            volume={89},
            pages={451--485},
            year={1975}
          }
          
\bib{IngKoss2}{book}{
            author={Ingarden, R. S.},
            author={Kossakowski, A.},
            title={Information Dynamics and Open Systems: Classical and Quantum Approach},
           publisher={Springer Verlag},
            year={1997}
          }


          \bib{JP1}{article}{
            author={Jak\v{s}i\'c, V.},
            author={Pillet, C.-A.},
            title={From resonance to master equations},
            journal={Annales de l'I. H. P., section A},
            volume={67},
            number={4},
            pages={425--445},
            year={1997}
          }

          \bib{KB}{article}{
            author={Kastoryano, M. J.},
            author={Brandao, F. G.},
            title={Quantum Gibbs samplers: The commuting case},
            journal={Communications in Mathematical Physics},
            volume={344},
            pages={915--957},
            year={2016}
          }

          \bib{Ke}{article}{
            author={Keyl, M.},
            title={Fundamentals of quantum information theory},
            journal={Physics reports},
            volume={369},
            number={5},
            pages={431--548},
            year={2002}
          }

    \bib{KGE}{book}{
        author={Kliesch, M.},
        author={Gogolin, C.},
        author={Eisert, J.},
        title={{L}ieb-{R}obinson bounds and the simulation of time-evolution of
            local observables in lattice systems},
        publisher={In Many-Electron Approaches in Physics},
        address={Chemistry and Mathematics, 301-318. Springer},
        date={2014},
    }

    \bib{Kr1}{book}{
        author={Kraus, K.},
        title={States, Effects, and Operations: Fundamental Notions of Quantum Theory},
        series={Lecture notes in physics},
        publisher={Springer},
        address={New York},
        year={1983}
      }

       \bib{KuwLem2024}{article}{
        author={Kuwahara, T.},
        author={Lemm. M.},
        title={Enhanced Lieb-Robinson bounds for a class of Bose-Hubbard type Hamiltonians},
        date={2024},
        journal={Preprint, arXiv 2405.04672},
    }

    \bib{KuwSaito1}{article}{
        author={Kuwahara, T.},
        author={Saito, K.},
        title={Absence of fast scrambling in thermodynamically stable long-range
            interacting systems},
        date={2021},
        journal={Phys. Rev. Lett.},
        volume={126},
            number={3},
        pages={030604},
    }
    
    \bib{KVS}{article}{
        author={Kuwahara, T.},
        author={Vu, T.~V.},
        author={Saito, K.},
        title={Optimal light cone and digital quantum simulation of interacting
            bosons},
        date={2022},
        journal={Preprint, arXiv 2206.14736},
    }

          \bib{LRSZ}{article}{
            author={Lemm, M.},
            author={Rubiliani, C.},
            author={Sigal, I. M.},
            author={Zhang, J.},
            title={Information propagation in long-range quantum many-body systems},
            date={2023},
            journal={Physical Review A},
            volume={108},
            number={6},
            pages={L060401}
          }

    \bib{LR}{article}{
        author={{L}ieb, E.~H.},
        author={Robinson, D.~W.},
        title={The finite group velocity of quantum spin systems},
        date={1972},
        ISSN={0010-3616},
        journal={Commun. Math. Phys.},
        volume={28},
            number={3},
        pages={251\ndash 257},
        url={http://projecteuclid.org/euclid.cmp/1103858407},
    }

          \bib{Lind}{article}{
            author={Lindblad, G.},
            title={On the generators of quantum dynamical semigroups},
            journal={Comm. Math. Phys.},
            volume={48},
            number={2},
            pages={119--130},
            year={1976}
          }

    \bib{NachOgS}{article}{
        author={Nachtergaele, B.},
        author={Ogata, Y.},
        author={Sims, R.},
        title={Propagation of correlations in quantum lattice systems},
        date={2006},
        journal={J. Stat. Phys.},
        volume={124},
        pages={1\ndash 13},
    }

    \bib{NRSS}{article}{
        author={Nachtergaele, B.},
        author={Raz, H.},
        author={Schlein, B.},
        author={Sims, R.},
        title={Lieb-{R}obinson bounds for harmonic and anharmonic lattice
            systems},
        date={2009},
        ISSN={0010-3616},
        journal={Comm. Math. Phys.},
        volume={286},
        number={3},
        pages={1073\ndash 1098},
        url={https://doi.org/10.1007/s00220-008-0630-2},
        review={\MR{2472026}},
    }

    \bib{NachOgS2}{article}{
        author={Nachtergaele, B.},
        author={Sims, R.},
        title={Lieb-{R}obinson bounds in quantum many-body physics},
        date={2010},
        journal={Contemp. Math},
        volume={529},
        pages={141\ndash 176},
    }

    \bib{NachSchlSSZ}{article}{
        author={Nachtergaele, B.},
        author={Schlein, B.},
        author={Sims, R.},
        author={Starr, S.},
        author={Zagrebnov, V.},
        title={On the existence of the dynamics for anharmonic quantum
            oscillator systems},
        date={2010},
        journal={Rev. Math. Phys.},
        volume={22},
            number={02},
        pages={207\ndash 231},
    }

    \bib{NachS}{article}{
        author={Nachtergaele, B.},
        author={Sims, R.},
        title={{L}ieb-{R}obinson bounds and the exponential clustering theorem},
        date={2006},
        journal={Commun. Math. Phys.},
        volume={265},
        pages={119\ndash 130},
    }

    \bib{NSY2}{article}{
        author={Nachtergaele, B.},
        author={Sims, R.},
        author={Young, A.},
        title={Quasi-locality bounds for quantum lattice systems. i.
            {L}ieb-{R}obinson bounds, quasi-local maps, and spectral flow automorphisms},
        date={2019},
        journal={J. Math. Phys.},
        volume={60},
            number={6},
        pages={061101},
    }

    \bib{NachVerZ}{article}{
        author={Nachtergaele, B.},
        author={Vershynina, A.},
        author={Zagrebnov, V.},
        title={{L}ieb-{R}obinson bounds and existence of the thermodynamic limit
            for a class of irreversible quantum dynamics},
        date={2011},
        journal={Contemp. Math},
        volume={552},
        pages={161\ndash 175},
    }

          \bib{OS}{article}{
            author={Ouyang, D.},
            author={Sigal, I. M.},
            title={On return to equilibrium in Markov quantum open systems},
            journal={JMP to appear, 2025}
          }

    \bib{Pou}{article}{
        author={Poulin, D.},
        title={{L}ieb-{R}obinson bound and locality for general {M}arkovian
            quantum dynamics},
        date={2010},
        journal={Phys. Rev. Lett.},
        volume={104},
            number={19},
        pages={190401},
    }

 \bib{Presk}{book}{
        author={Preskill, J.},
        title={Lecture Notes on Quantum Information},
        publisher={California Institute of Technology},
            date={2018},
               }

          \bib{RS1}{book}{
            author={Reed, M.},
            author={Simon, B.},
            title={Method of Modern Mathematical Physics. I: Functional Analysis},
            publisher={Academic press},
            year={1981}
          }


    \bib{RS}{article}{
        author={Roberts, D.~A.},
        author={Swingle, B.},
        title={{L}ieb-{R}obinson bound and the butterfly effect in quantum field
            theories},
        date={2016},
        journal={Phys. Rev. Lett.},
        volume={117},
            number={9},
        pages={091602},
    }

    \bib{Schatten}{book}{
        author={Schatten, R.},
        title={Norm Ideals of Completely Continuous Operators},
        date={2013},
        publisher={Springer-Verlag},
        volume={27},
    }

        \bib{SchSurvey}{article}{
        author={Schlag, W.},
        title={On pointwise decay of waves},
        date={2021},
        journal={J. Math. Phys.},
        volume={62},
            number={6},
        pages={061509},
    }

    \bib{SHOE}{article}{
        author={Schuch, N.},
        author={Harrison, S.~K.},
        author={Osborne, T.~J.},
        author={Eisert, J.},
        title={Information propagation for interacting-particle systems},
        date={2011},
        journal={Phys. Rev. A},
        volume={84},
            number={3},
        pages={032309},
    }

    \bib{SM}{article}{
        author={Shtanko, O.},
        author={Movassagh. R.},
        title={Preparing thermal states on noiseless and noisy programmable quantum processors},
        date={2023},
        journal={Preprint, arXiv 2112.14688},
    }

    \bib{Sig}{article}{
        author={Sigal, I.~M.},
        title={On long range scattering},
        date={1990},
        journal={Duke Math. J.},
        volume={60},
        pages={473\ndash 496},
    }

    \bib{SigSof2}{article}{
        author={Sigal, I.~M.},
        author={Soffer, A.},
        title={Long-range many-body scattering},
        date={1990},
        journal={Invent. Math},
        volume={99},
            number={1},
        pages={115\ndash 143},
    }

          \bib{SigWu}{article}{
            author={Sigal, I. M.},
            author={Wu, X.},
            title={On propagation of information in quantum mechanics and maximal velocity bounds},
            journal={Lett. Math. Phys.},
            volume={115},
            number={17},
            year={2025},

          }

        \bib{SZ}{article}{
        author={Sigal, I.~M.},
        author={Zhang, J.},
        title={On propagation of information in quantum many-body systems},
        journal={Annals of Physics},
        pages={170103},
        year={2025},
    }

    \bib{Skib}{article}{
        author={Skibsted, E.},
        title={Propagation estimates for $N$-body {S}chr\"odinger operators},
        date={1991},
        journal={Commun. Math. Phys.},
        volume={142},
            number={1},
        pages={67\ndash 98},
    }

 \bib{TOVPV}{article}{
        author={Temme, K.},
        author={Osborne, T. J.},
        author={Vollbrecht, K. G.},
        author={Poulin, D.},
        author={Verstraete, F.},
        title={Quantum metropolis sampling},
        date={2011},
        journal={Nature},
        volume={471},
            number={7336},
        pages={87-90},
    }

    \bib{TranEtAl3}{article}{
        author={Tran, M.~C.},
        author={Su, Yuan},
        author={Carney, D.},
        author={Taylor, J.~M.},
        title={Faster digital quantum simulation by symmetry protection},
        date={2021},
        journal={PRX QUANTUM},
        volume={2},
            number={1},
        pages={010323},
    }

    \bib{TranEtal5}{article}{
        author={Tran, M.~C.},
        author={Guo, A.~Y.},
        author={Deshpande, A.},
        author={Lucas, A.},
        author={Gorshkov, A.~V.},
        title={Optimal state transfer and entanglement generation in power-law
            interacting systems},
        date={2021},
        journal={Phys. Rev. X},
        volume={11},
        number={3},
        pages={031016},
    }

 \bib{Weinb}{article}{author={Weinberg, S.}, title={Quantum mechanics without state vectors}, 	 journal={Phys. Rev. A}, volume={90}, number={4}, pages={042102}, year={2014},
 }

    \bib{YA}{article}{
        author={Yung, M. H.},
        author={Aspuru-Guzik, A.},
        title={A quantum-quantum Metropolis algorithm},
        date={2012},
        journal={Proceedings of the National Academy of Sciences},
        volume={109},
            number={3},
        pages={754-759},
    }

	\end{biblist}
\end{bibdiv}

\end{document}